\documentclass[conference,letterpaper]{IEEEtran}

\IEEEoverridecommandlockouts

\usepackage[utf8]{inputenc} 
\usepackage[T1]{fontenc}
\usepackage{url}
\usepackage{ifthen}
\usepackage{cite}
\usepackage[cmex10]{amsmath} 

\usepackage{amssymb}
\usepackage{xcolor}
\usepackage{graphicx}
\newtheorem{proposition}{Proposition}
\usepackage{dirtytalk}
\usepackage{comment}
\usepackage{orcidlink}
\usepackage{algorithm}
\usepackage{algpseudocode}
\usepackage{tikz}
\usetikzlibrary{trees,positioning}
\usepackage{booktabs}
\usepackage[normalem]{ulem}

\algrenewcommand\algorithmicrequire{\textbf{Input:}}
\algrenewcommand\algorithmicensure{\textbf{Output:}}

\newcommand{\cS}{\mathcal{S}}

\newcommand{\bfp}{{\boldsymbol p}}

\renewcommand{\le}{\leqslant}
\renewcommand{\leq}{\leqslant}
\renewcommand{\ge}{\geqslant}

\newcommand{\Cref}[1]{Co\-ro\-lla\-ry\,\ref{#1}}



\outer\def\proclaim #1. #2\par{\medbreak
 \noindent{\bf#1.\enspace}{\sl#2\par}%
 \ifdim\lastskip<\medskipamount \removelastskip\penalty55\medskip\fi}

\newtheorem{theorem}{Theorem}
\newtheorem{definition}{Definition}

\newtheorem{lemma}{Lemma}

\begin{document}
\title{An Additive Approximation Scheme for Generating Dyadic Codings for the
Outputs of an LLM\vspace{-0.8ex}}

\author{%
  \IEEEauthorblockN{Daniella Bar-Lev\IEEEauthorrefmark{1},
                    Farzad Farnoud\IEEEauthorrefmark{2},
                    and Ryan Gabrys\IEEEauthorrefmark{3}\IEEEauthorrefmark{4}}
\thanks{\IEEEauthorrefmark{1} Department of Mathematics, University of Zurich, Switzerland;
    \IEEEauthorrefmark{2}Department of Electrical and Computer Engineering and the Department of Computer Science UVA, USA;
  \IEEEauthorrefmark{3}Calit2 at the UCSD, USA; \IEEEauthorrefmark{4}Naval Information Warfare Center Pacific, USA.}
  \thanks{Emails: daniella.bar-lev@math.uzh.ch, farzad@virginia.edu, rgabrys@ucsd.edu, ryan.gabrys.civ@us.navy.mil}
  \thanks{
  The work of D. Bar-Lev was supported by Schmidt Sciences and by the Swiss National Science Foundation under grant number 212865. 
  }
  \vspace{-8ex}}

\maketitle

\begin{abstract}
We study the problem of approximating a discrete probability distribution, such as the next-token distribution of a large language model, by a dyadic distribution induced by a binary tree under encoding rate constraints. The objective is to partition the support of the distribution and assign dyadic probabilities to minimize total variation distance while achieving a prescribed rate. We formulate this task as a tree-based partitioning problem and develop a polynomial-time additive approximation scheme for the rate-constrained setting in the constant-rate regime. Our results provide provable guarantees for near-optimal dyadic approximations and, as an application, yield a principled framework for LLM-based steganography, where the rate maps to bits of hidden information embedded per token and the total variation bound controls statistical detectability.
\end{abstract}

\section{Introduction}
Large language models (LLMs) generate text by sampling, at each step, from a categorical distribution over a large token vocabulary. Many downstream applications, such as LLM steganography, constrained generation, and structured discretization, need to replace this categorical distribution by a tractable approximation that interfaces cleanly with binary coding. Dyadic distributions, those whose probabilities are negative powers of two, fit this need. They correspond exactly to the leaf masses of full binary trees, where the root-to-leaf path of each leaf is itself a bitstring. For steganography, this gives a direct one-to-one correspondence between leaves and hidden messages. Specifically, sampling a token from the dyadic distribution is  equivalent to transmitting the bitstring labeling its leaf, with the leaf mass $2^{-d}$ determining how often that $d$-bit message is sent. This makes prefix-free coding, sampling, and arithmetic-coding manipulations straightforward.

This paper studies the resulting approximation problem. Given a categorical distribution $\bfp$ over $n$ tokens, we look for a full binary tree together with a surjective assignment of tokens to its leaves, so that the induced dyadic distribution is close to $\bfp$ in some statistical sense and the resulting encoding has rate at least $R$. We measure closeness by total variation distance because, in the LLM steganography setting that motivates this paper, it directly bounds an adversary's advantage in detecting the hidden channel via binary hypothesis testing between the cover and stego distributions. This is in contrast to much of the existing LLM steganography literature, which optimizes Kullback--Leibler divergence or cross-entropy surrogates~\cite{Dai2019,Ziegler2019,Huang2024ODStega}.

We call this the \emph{Tree-Partitioning Problem} (TPP). The number of leaves $L$ is induced by the tree rather than fixed in advance, and the rate constraint couples the tree shape to the partition through Kraft's inequality~\cite{CoverThomas2006}. Even very restricted instances of TPP --- e.g., depth-$1$ trees with $L=2$ --- are NP-hard via reduction from the partition problem, so we cannot hope for exact polynomial-time solutions. We instead build a polynomial-time additive approximation scheme, drawing on the classical polynomial-time approximation scheme (PTAS) toolkit for structured knapsack problems~\cite{CapraraMSSP2000,MartelloToth1990,Kellerer2004}.

We work throughout in the \emph{constant-rate regime}, where the target rate $R$ is fixed independently of $n$. This regime fits the LLM steganography setting, where the encoder needs the channel to carry at least some minimum number of bits per token but is otherwise free to optimize statistical fidelity. By contrast, regimes in which $R$ or $L$ scales with $n$ degenerate either into trivial near-symbol-wise constructions or into substantially more complex enumeration problems. We comment briefly on these in the discussion.

We make three contributions. First, we introduce the TTP as a unified formulation of dyadic approximation under joint divergence and rate constraints, making explicit the roles of tree shape, leaf assignment, and Kraft's inequality. Second, for the rate-constrained TPP (Problem~1), we give a polynomial-time additive approximation scheme with running time $O(n)\cdot \exp(O(\log(1/\varepsilon)/\varepsilon))$
and total-variation guarantee $\mathrm{OPT}+12\varepsilon$ in the constant-rate regime. Third, we position variable-length dyadic encodings and total variation distance as a natural but underexplored foundation for LLM steganography.

To design this approximation scheme, our algorithm builds on a standard PTAS toolkit for knapsack-type partitioning and load-balancing problems~\cite{Sahni1975,IbarraKim1975,Lawler1979,Kellerer2004}. Specifically, we synthesize several techniques from this literature:
mall-/large-item classification, aggregation of negligible items,  depth truncation, structural monotonicity of optimal solutions, rounding onto a finite lattice, and dynamic programming over compressed load vectors.
For a fixed height vector ${\bf H}$, the assignment subproblem has the flavor of target-load
partitioning. More specifically, the token probabilities play the role of item sizes, leaves play the
role of bins, and the dyadic masses $2^{-h_j}$ are the target loads. This
connects the fixed-tree subproblem to multiway number partitioning,
scheduling/load-balancing, and bin-assignment objectives based on absolute
deviation from prescribed targets
\cite{Korf2009,Alon1998,Glover1989}. The full TPP, however, is not a direct
instance of these models. Its target loads are not fixed in advance and instead they are
induced by a binary tree that must be optimized jointly with the partition.
Consequently, the height vector must simultaneously determine dyadic masses,
satisfy Kraft's equality, and meet the rate constraint
$\sum_j 2^{-h_j}h_j\ge R$. Thus, while the fixed-tree assignment problem is
knapsack-like, the main technical work is to adapt these ideas to a coupled
partition-and-prefix-coding problem, a structure absent from standard
multiple-knapsack, multiple-subset-sum, and load-balancing formulations
\cite{CapraraMSSP2000,ChekuriKhanna2006}.

Section~\ref{sec:problem} defines the TPP, introduces two constrained variants, and states the NP-hardness result that motivates approximation. Section~\ref{sec:setup} specializes to the constant-rate regime by grouping items by mass and imposing three instance-level conditions. Section~\ref{sec:alg} contains the technical core. Here, we apply four reductions (truncation, grouping, dynamic programming, and feasibility repair) that together prove the main theorem. Section~\ref{sec:discussion} discusses the relevance of the assumptions for LLM-scale supports, the dual formulation, and other scaling regimes.

\section{Definitions and Problem Statement}\label{sec:problem}

Consider a categorical distribution $\bfp=\left( p_1, p_2, \ldots, p_n \right)$ defined over a support of size $n$, representing the output token distribution of an LLM. Without loss of generality, we assume the probabilities are sorted in non-increasing order: $p_1\ge p_2\ge \ldots \ge p_n$.

Our objective is to choose a full binary tree together with a partition of the support onto its leaves. For any candidate full binary tree, let $L$ denote its number of leaves. Thus $L$ is not fixed in advance; rather, it is induced by the tree under consideration. We assume throughout that every leaf is used, i.e., empty leaves are not permitted.

\begin{definition}[Partition Function]
Fix a candidate full binary tree with $L$ leaves. Given integers $n\ge L$, let 
$
\pi:[n]\twoheadrightarrow[L]
$
be a surjective partition function, so that every leaf is used. For each $j\in [L]$ we define\vspace{-0.75ex}
\[
\cS_j= \left\{ i\in [n]: \pi(i) = j \right\}\vspace{-0.75ex}
\]
to be the set of token indices assigned to leaf $j$. Note that 
$\cS_j\ne\varnothing$ for all $j$. The partition probability is denoted\vspace{-0.75ex}
\[
\Pr(\cS_j) = \sum_{i\in \cS_j} p_i.\vspace{-0.75ex}
\]
We represent the partition through the ordered sequence\vspace{-0.75ex}
\[
{\bf S} = \left(\cS_1, \cS_2, \ldots, \cS_L\right).\vspace{-0.75ex}
\]
\end{definition}

\begin{definition}[Height Function]
Fix a candidate full binary tree with $L$ leaves. Let $h:[L]\to[H]$ be the height function that returns the depth of leaf $j$
in the encoding tree, where $H$ denotes the maximum depth of the tree. We represent this function through the ordered vector\vspace{-0.75ex}
\[
{\bf H}= (h_1,h_2,\ldots, h_L),\vspace{-0.75ex}
\]
where $h_j=h(j)$ denotes the height of leaf $j$.
\end{definition}
\begin{definition}[Information Rate] For a given height function $h$, we define the information rate of the encoding tree distribution as\vspace{-0.75ex}
\[
\mathfrak{R}({\bf S},{\bf H}) = \sum_{j=1}^L 2^{-h_j}\cdot h_j.\vspace{-0.75ex}
\]
\end{definition}
\begin{definition}[Divergence Measure] For a given partition function $\pi$ and height function $h$, we define the total variation distance\footnote{The variation distance is typically defined as $\frac{1}{2} \sum_{j=1}^L \left|2^{-h_j} - \Pr(\cS_j)\right|$, but we chose to omit the $\frac{1}{2}$ coefficient for brevity since the  minimization of both formulations is the same.} between the partition distribution and the encoding tree distribution as\vspace{-0.75ex}
\[
\mathfrak{D}({\bf S},{\bf H}) = \sum_{j=1}^L \left|2^{-h_j} - \Pr(\cS_j)\right|.\vspace{-0.75ex}
\]
\end{definition}

Figure~\ref{fig:example} shows a small running example with $n=8$ 
and $L=5$ 
that illustrates the partition function $\pi$, the partition ${\bf S}$,
the height vector $\bf H$, the divergence $\mathfrak{D}$, and the rate $\mathfrak{R}$. 

\begin{figure}[!t]
\centering
\begin{tikzpicture}[
  every node/.style={font=\footnotesize},
  leaf/.style={draw, rounded corners=2pt, fill=blue!8,
               minimum width=12mm, minimum height=5mm, inner sep=2pt},
  inn/.style={circle, fill=black, inner sep=1.2pt},
  level distance=6mm,
  level 1/.style={sibling distance=34mm},
  level 2/.style={sibling distance=17mm},
  level 3/.style={sibling distance=14mm},
  edge from parent/.style={draw, thin}
]
\node[inn] {}
  child {node[inn] {}
    child {node[leaf] (l1) {$\cS_1{=}\{1\}$}}
    child {node[leaf] (l2) {$\cS_2{=}\{2,8\}$}}
  }
  child {node[inn] {}
    child {node[leaf] (l3) {$\cS_3{=}\{3,7\}$}}
    child {node[inn] {}
      child {node[leaf] (l4) {$\cS_4{=}\{4\}$}}
      child {node[leaf] (l5) {$\cS_5{=}\{5,6\}$}}
    }
  };
\node[below=0.3mm of l1, font=\scriptsize] {$h_1{=}2$};
\node[below=0.3mm of l2, font=\scriptsize] {$h_2{=}2$};
\node[below=0.3mm of l3, font=\scriptsize] {$h_3{=}2$};
\node[below=0.3mm of l4, font=\scriptsize] {$h_4{=}3$};
\node[below=0.3mm of l5, font=\scriptsize] {$h_5{=}3$};
\end{tikzpicture}

\vspace{2pt}

{\footnotesize
\renewcommand{\arraystretch}{1.0}
\begin{tabular}{@{}c@{\;\;}c@{\;\;}c@{\;\;}c@{\;\;}c@{}}
\toprule
$j$ & $\cS_j$ & $\Pr(\cS_j)$ & $2^{-h_j}$ & $|2^{-h_j}-\Pr(\cS_j)|$ \\
\midrule
1 & $\{1\}$    & $0.30$ & $0.250$ & $0.050$ \\
2 & $\{2,8\}$  & $0.22$ & $0.250$ & $0.030$ \\
3 & $\{3,7\}$  & $0.20$ & $0.250$ & $0.050$ \\
4 & $\{4\}$    & $0.12$ & $0.125$ & $0.005$ \\
5 & $\{5,6\}$  & $0.16$ & $0.125$ & $0.035$ \\
\bottomrule
\end{tabular}}

\caption{Running example with $n=8$ tokens, $L=5$ leaves, and probabilities $\bfp=(0.30,\,0.20,\,0.15,\,0.12,\,0.10,\,0.06,\,0.05,\,0.02)$. The partition function $\pi:[8]\twoheadrightarrow[5]$ assigns $1{\mapsto}1$, $\{2,8\}{\mapsto}2$, $\{3,7\}{\mapsto}3$, $4{\mapsto}4$, and $\{5,6\}{\mapsto}5$, inducing the sets $\cS_j$ shown at the leaves. The height vector ${\bf H}=(2,2,2,3,3)$ satisfies Kraft's equality $\sum_j 2^{-h_j}=1$. Summing the last column of the table gives $\mathfrak{D}({\bf S},{\bf H})=0.17$, and the rate is $\mathfrak{R}({\bf S},{\bf H})=\sum_j 2^{-h_j}h_j=3\cdot \tfrac14\cdot 2+2\cdot\tfrac18\cdot 3=2.25$.\vspace{-3.75ex}}
\label{fig:example}
\end{figure}

\subsection{The Tree-Partitioning Problem (TPP)}
In its basic form, the TPP seeks a partition function $\pi$ and height function~$h$ that seeks to both minimize the divergence $\mathfrak{D}({\bf S}, {\bf H})$ and maximize the information rate $\mathfrak{R}({\bf S}, {\bf H})$ subject to the constraint that the heights correspond to a valid full binary tree so that\vspace{-0.75ex}
\begin{align}
\label{eq:tree}
\sum_{j=1}^L 2^{-h_j}=1.\vspace{-0.75ex}
\end{align}
Unless stated otherwise, we assume all height functions studied in this work correspond to a valid full binary tree and thus satisfies (\ref{eq:tree}) according to the Kraft inequality \cite{CoverThomas2006}.

We now introduce two constrained variants of the
TPP, which incorporate rate constraints and rate optimization.

\textit{\textbf{Problem 1: Rate-Constrained
TPP.}}
Given $\bfp$ and a fixed encoding rate $R>0$ find $({\bf S}, {\bf H})$ that minimizes
$
\mathfrak{D}({\bf S},{\bf H})$
subject to\vspace{-0.75ex}
\begin{align}
\mathfrak{R}({\bf S},{\bf H}) \ge R. \label{eq:rate1}\vspace{-0.75ex}
\end{align}

\textit{\textbf{Problem 2: Divergence-Constrained
TPP.}} Given $\bfp$ and a fixed tolerance $\Delta>0$ find  $({\bf S}, {\bf H})$ that maximizes the information rate $\mathfrak{R}({\bf S},{\bf H})$, subject to\vspace{-0.75ex}
\begin{align}
\mathfrak{D}({\bf S}, {\bf H}) \le \Delta. \label{eq:KLF}\vspace{-0.75ex}
\end{align}

The body of this paper focuses on Problem~1 in the constant-rate regime, which is the relevant regime for steganography. One typically imposes a minimum rate and optimizes detectability, measured here through total variation distance~\cite{Cachin1998,Fridrich2009,Bloch2016}. Problem~2 and regimes where $R$ or $L$ scales with $n$ are discussed in Section~\ref{sec:discussion}.

As even very restricted instances of the TPP are computationally hard, we cannot hope for exact polynomial-time solutions. We therefore turn to approximation.

\begin{lemma}\label{lem:np-hard}
The TPP is NP-hard, even when the tree depth is fixed to $1$ (corresponding to $L=2$ partitions).
\end{lemma}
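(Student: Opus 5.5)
We reduce from the classical \textsc{Partition} problem. An instance is a multiset of positive integers $a_1,\ldots,a_n$ with $\sum_i a_i = 2B$, and the question is whether some subset sums to exactly $B$. We build a TPP instance by normalizing, $p_i = a_i/(2B)$, and sorting so that $p_1\ge\cdots\ge p_n$. This is computable in polynomial time and gives a valid categorical distribution.

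Next we pin down the tree. When the depth is fixed to $1$, the only full binary tree is the root with two leaves, so $L=2$ and ${\bf H}=(1,1)$. This height vector satisfies (\ref{eq:tree}), and its rate is $\mathfrak{R}=1$. Any rate constraint with $R\le 1$ in Problem~1 is therefore satisfied automatically. The TPP then reduces to choosing a surjective $\pi:[n]\twoheadrightarrow[2]$, i.e., a nontrivial split of the support into $\cS_1,\cS_2$. Since $\Pr(\cS_1)+\Pr(\cS_2)=1$, the objective is
\[
\mathfrak{D}({\bf S},{\bf H}) = \left|\tfrac12-\Pr(\cS_1)\right|+\left|\tfrac12-\Pr(\cS_2)\right| = 2\left|\tfrac12-\Pr(\cS_1)\right| = \frac{1}{B}\left|B-\sum_{i\in\cS_1}a_i\right|.
\]

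It follows that the optimum of this TPP instance equals $0$ if and only if the \textsc{Partition} instance is a yes-instance. For the forward direction, a subset summing to $B$ is nonempty and proper because every $a_i>0$, so it defines a surjective $\pi$ with $\mathfrak{D}=0$. For the converse, $\mathfrak{D}=0$ forces $\sum_{i\in\cS_1}a_i=B$. Hence deciding whether the optimal divergence is $0$ is NP-hard, so the optimization version of TPP is NP-hard as well. This holds both for the basic TPP with depth fixed to $1$ and for Problem~1 with any $R\le 1$.

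The only delicate step is the bookkeeping around surjectivity and the depth restriction. Empty leaves are disallowed, so we must check that a balanced split automatically uses both leaves, which holds because the weights are positive. We must also check that fixing depth $1$ leaves exactly one admissible ${\bf H}$, so that no freedom in the tree can bypass the \textsc{Partition} structure.
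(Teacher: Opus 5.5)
Your proposal is correct and follows the same route as the paper, which only says ``a direct reduction from the partition problem to TPP with $L=2$ and depth $1$.'' You fill in the details the paper omits: the normalization $p_i=a_i/(2B)$, the identity $\mathfrak{D}=\frac{1}{B}\bigl|B-\sum_{i\in\cS_1}a_i\bigr|$, the surjectivity check from positive weights, and the observation that $\mathfrak{R}=1$ makes Problem~1 with $R\le 1$ coincide with this instance.
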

\begin{IEEEproof}
A direct reduction from the partition problem~\cite{GareyJohnson1979} to TPP with $L=2$ and depth $1$.
\end{IEEEproof}

\section{Constant-Rate Regime: Setup}\label{sec:setup}

We work in the constant-rate regime, where $R$ is fixed independently of $n$ and $\varepsilon\to 0$ controls the additive approximation error. This section sets up the instance conditions used by the approximation algorithm. We first classify items as small or large according to their probability mass, which ensures that only constantly many items can be large. We then state three numerical assumptions on the support size and on $\varepsilon$ for the seeding and repair steps in Section~\ref{sec:alg}.

\subsection{Small and Large Item Classification}
We split the support by probability mass at the threshold $\theta\triangleq\varepsilon\cdot 2^{-R/\varepsilon}$. The {\bfseries\itshape small item set} is\vspace{-0.75ex}
\begin{align}\label{eq: small}
\cS_{\text{small}} \triangleq \{i\in [n]: p_i\le \theta\},
\vspace{-0.75ex}\end{align}
and the {\bfseries\itshape large item set} is $\cS_{\text{large}} \triangleq [n] \setminus \cS_{\text{small}}$. Each large item has mass exceeding the constant $\theta$, so the total mass constraint $\sum_i p_i = 1$ forces $|\cS_{\text{large}}|\le 1/\theta = O(1)$ for fixed $R, \varepsilon$, and consequently $|\cS_{\text{small}}|\ge n - O(1)$.

\subsection{Three Assumptions on the Instance}\label{sec:assumptions}

The algorithm of Section~\ref{sec:alg} operates under the following three numerical assumptions on the instance $(\bfp,n,R,\varepsilon)$. Their role is to ensure that the support is large enough to support a depth-$R$ encoding tree and contains enough small items for the technical operations performed in Section~\ref{sec:alg}.
\medskip

\noindent\textbf{Assumption 1.} $n\ge 2^{R}$. The minimum vocabulary needed to support a tree with rate at least $R$ via leaves at depth $\ge R$.

\noindent\textbf{Assumption 2.} $|\cS_{\text{small}}| \ge 2^{R/\varepsilon} + 1/\varepsilon$. Minimal 
number of small items, 
required by Lemmas~\ref{lem:repair} and~\ref{lem:seed} below.

\noindent\textbf{Assumption 3.} $\varepsilon$ is small enough that $2^{R/\varepsilon}\ge 1/\varepsilon$. Automatic for all sufficiently small $\varepsilon$ when $R$ is constant.

\medskip

\noindent We interpret each candidate solution as a dyadic tree whose leaves correspond to the partition subsets $\{\cS_j\}_{j=1}^L$.

\section{Algorithm and Analysis}\label{sec:alg}
\subsection{Roadmap}\label{sec:roadmap}

The hardness of TPP (Lemma~\ref{lem:np-hard}) rules out exact polynomial-time solutions and motivates our additive approximation.
The search space --- arbitrary full binary trees together with arbitrary partitions of $[n]$ onto their leaves --- is far too large to enumerate. Two of the four reductions shrink this search space \emph{combinatorially}, and one shrinks it \emph{numerically}. The fourth restores any feasibility lost in the process.

\medskip

\noindent\textbf{(i) Truncation.} For an additive cost of zero in divergence, it suffices to consider trees of depth at most $d\triangleq\log_2(1/\varepsilon)$. The set of such tree shapes is finite, with size $\exp(O(1/\varepsilon))$. This reduction, made precise in Lemma~\ref{lem:truncation}, replaces the search over heights by enumeration over a constant number of bounded-depth tree shapes.

\noindent\textbf{(ii) Grouping.} Tokens with mass below $\varepsilon^2$ can be aggregated into ``blocks'' of mass between $\varepsilon^2$ and $2\varepsilon^2$ at a divergence cost of $4\varepsilon$. After grouping, the number of objects to assign (the original heavy items together with the blocks) is at most $1/\varepsilon^2+1$, a constant. This step (Lemma~\ref{lem:blocking}) reduces the per-tree assignment problem to constant size.

\noindent\textbf{(iii) Atomic assignment via dynamic programming.} For each fixed bounded-depth tree, the resulting bounded-size assignment problem is solved by a dynamic program over a discretized leaf-mass lattice, returning an assignment within $2\varepsilon$ of the divergence-optimal blocked assignment in time $O(n)\cdot\exp(O(\log(1/\varepsilon)/\varepsilon))$ (Proposition~\ref{prop:large-item-dp}).

\noindent\textbf{(iv) Feasibility repair.} The depth-truncated trees produced in step (i) need not satisfy $\mathfrak{R}\ge R$. Whenever a bounded-depth candidate has a leaf at depth exactly $d$, that leaf can be replaced by a complete subtree of $T_L=2^{R/\varepsilon}$ uniform sub-leaves seeded with reserved small items. This restores rate at additive divergence cost $4\varepsilon$ (Lemma~\ref{lem:repair}). 

\medskip

Combining 
(i)-(iv) along with a leaf-seeding step that potentially includes a divergence cost of $2 \epsilon$, we obtain Theorem~\ref{thm:main}: an algorithm with running time linear in $n$ for fixed $\varepsilon$ that returns a feasible solution within $\mathrm{OPT}+12\varepsilon$ in total variation distance.
Throughout, it will be convenient to denote by\vspace{-0.75ex}
\[
\mathrm{OPT}({\bf H})
\triangleq
\min_{\bf S}\,\mathfrak D({\bf S},{\bf H})\vspace{-0.75ex}
\]
the divergence-optimal partition cost for a height vector ${\bf H}$, where the minimum is over surjective partitions ${\bf S}$ for ${\bf H}$.

\subsection{Truncation Reduction}\label{sec:truncation}

The first reduction bounds the depth of the trees we have to enumerate. Any feasible solution can be replaced, with no increase in divergence, by one of depth at most $d = \log_2(1/\varepsilon)$. The set of full binary tree shapes of bounded depth is finite (with size $\exp(O(1/\varepsilon))$), so this turns the outer search into a finite enumeration at no cost in approximation quality. Truncation can, however, lower the rate, since shallower leaves contribute less to $\mathfrak{R}$. Step (iv) (Section~\ref{sec:repair}) recovers the lost rate by re-expanding a depth-$d$ leaf into a uniform subtree, which is why the lemma additionally guarantees the existence of such a leaf.

\begin{lemma}[Truncation reduction]\label{lem:truncation}
Let $({\bf S}^*,{\bf H}^*)$ be any feasible solution to Problem~1, and set $d=\log_2(1/\varepsilon)$. There exists a height vector $\overline{\bf H}$ of maximum depth at most $d$ such that\vspace{-0.75ex}
\[
\mathrm{OPT}(\overline{\bf H})
\le
\mathfrak{D}({\bf S}^*,{\bf H}^*).\vspace{-0.75ex}
\]
Moreover, if $\max_j h^*_j>d$ then $\overline{\bf H}$ has at least one leaf at depth exactly $d$.
\end{lemma}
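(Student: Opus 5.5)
The plan is to build $\overline{\bf H}$ by \emph{collapsing} the optimal tree at depth $d$, and to exhibit one explicit partition $\overline{\bf S}$ for it. Then $\mathrm{OPT}(\overline{\bf H})\le\mathfrak D(\overline{\bf S},\overline{\bf H})$, and it remains to compare $\mathfrak D(\overline{\bf S},\overline{\bf H})$ with $\mathfrak D({\bf S}^*,{\bf H}^*)$. I will assume $d$ is an integer, for instance by taking $\varepsilon$ to be a negative power of two. Otherwise I would work with $\lfloor\log_2(1/\varepsilon)\rfloor$; the argument does not depend on the specific value.

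\emph{Construction.} Let $T^*$ be the full binary tree realizing ${\bf H}^*$. For every internal node $v$ of $T^*$ at depth exactly $d$, delete the subtree below $v$ and make $v$ a leaf. Its set is $\overline{\cS}_v=\bigcup_{j\in\mathrm{Leaves}(v)}\cS^*_j$, the union of the sets at the leaves of $T^*$ below $v$. Leaves of $T^*$ at depth $\le d$ keep their depths and sets.

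\emph{Validity.} The new tree is still a full binary tree, so its height vector satisfies (\ref{eq:tree}). Each $\overline{\cS}_v$ is a union of nonempty sets, and the new sets still partition $[n]$, so the new partition is surjective onto the new leaves. Its maximum depth is at most $d$.

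\emph{Divergence.} The key identity is that in a full binary tree the dyadic mass of a node equals the total mass of the leaves below it:
\[
2^{-d}=\sum_{j\in\mathrm{Leaves}(v)}2^{-h^*_j}.
\]
This follows by induction on subtree height, since each internal node's mass $2^{-k}$ splits into two children of mass $2^{-k-1}$; equivalently, it is Kraft's equality applied to the subtree rooted at $v$. On the other side, $\Pr(\overline{\cS}_v)=\sum_{j\in\mathrm{Leaves}(v)}\Pr(\cS^*_j)$. By the triangle inequality,
\[
\bigl|2^{-d}-\Pr(\overline{\cS}_v)\bigr|\le\sum_{j\in\mathrm{Leaves}(v)}\bigl|2^{-h^*_j}-\Pr(\cS^*_j)\bigr|.
\]
The untouched leaves contribute identical terms to both sums. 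Summing over all leaves gives $\mathfrak D(\overline{\bf S},\overline{\bf H})\le\mathfrak D({\bf S}^*,{\bf H}^*)$, and hence $\mathrm{OPT}(\overline{\bf H})\le\mathfrak D({\bf S}^*,{\bf H}^*)$.

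\emph{The ``moreover'' clause.} Suppose $\max_j h^*_j>d$, and take a leaf $j$ with $h^*_j>d$. Its ancestor at depth $d$ is an internal node, so it is collapsed and becomes a leaf of $\overline{\bf H}$ at depth exactly $d$.

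\emph{Main obstacle.} I expect no deep difficulty. The only points needing care are the mass-conservation identity for the collapsed subtrees, which is precisely what makes the triangle-inequality step cost-free, and the integrality of $d$. Note that the rate may drop under this construction; the lemma does not claim otherwise, and that loss is handled later by the repair step.
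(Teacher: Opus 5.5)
Your proposal is correct and matches the paper's proof: collapse the tree at depth $d$, merge all sets below each depth-$d$ internal node into the new leaf, and use the triangle inequality to show that the divergence does not increase; the ``moreover'' clause is argued the same way. The only difference is packaging. The paper iterates pairwise sibling merges, each justified by the triangle inequality, while you apply the triangle inequality once per collapsed node using the subtree identity $2^{-d}=\sum_{j\in\mathrm{Leaves}(v)}2^{-h^*_j}$.
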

\begin{IEEEproof}
If $\max_j h^*_j\le d$, take $\overline{\bf H}={\bf H}^*$ and the claim is immediate.

Otherwise, construct $\overline{\bf H}$ from ${\bf H}^*$ by truncating at depth $d$: every internal node of ${\bf H}^*$ at depth $d$ becomes a leaf of $\overline{\bf H}$. Define $\overline{\bf S}$ from ${\bf S}^*$ by mechanically merging all partition sets that lie below depth $d$ into the corresponding new leaf. Each individual merge of two siblings at depth $h\ge d+1$ replaces the two contributions $|2^{-h}-p|+|2^{-h}-p'|$ in $\mathfrak{D}({\bf S}^*,{\bf H}^*)$ by the single contribution $|2^{-(h-1)}-(p+p')|$, which by the triangle inequality is no larger. Iterating these merges,
\[
\mathfrak D(\overline{\bf S},\overline{\bf H})\le \mathfrak D({\bf S}^*,{\bf H}^*),
\]
and hence
\[
\mathrm{OPT}(\overline{\bf H})\le \mathfrak D(\overline{\bf S},\overline{\bf H})\le \mathfrak D({\bf S}^*,{\bf H}^*).
\]
Since $\max_j h^*_j>d$, the truncation creates at least one leaf at depth exactly $d$.
\end{IEEEproof}

\subsection{Grouping Tiny Items}\label{sec:blocking}

After truncation, every candidate height vector has at most $1/\varepsilon$ leaves. The remaining task, divergence-optimally assigning $n$ items to those leaves, still depends on $n$, which for an LLM vocabulary is in the tens of thousands. The next reduction shrinks $n$ to a constant: items of mass below $\varepsilon^2$ are aggregated into ``blocks'' of mass between $\varepsilon^2$ and $2\varepsilon^2$, at a divergence cost of $4\varepsilon$. After blocking, the number of \emph{atomic units} --- original heavy items together with blocks --- is at most $1/\varepsilon^2+1$, independent of $n$. Section~\ref{sec:large-item-assignment} solves the resulting bounded-size assignment problem by dynamic programming.

\begin{lemma}[Blocking tiny items]\label{lem:blocking}
Fix a candidate height vector $\overline{\bf H}$ of maximum depth
$d=\log_2(1/\varepsilon)$, and let $L$ denote its number of leaves. Let\vspace{-0.75ex}
\[
\mathcal T \triangleq \{i\in[n]: p_i<\varepsilon^2\}.\vspace{-0.75ex}
\]
Form blocks $\mathcal B_1,\ldots,\mathcal B_M$ by greedily grouping the
items of $\mathcal T$ so that each completed block has total probability
in $[\varepsilon^2,2\varepsilon^2)$, with at most one final residual
block $\mathcal U$ satisfying
$
\Pr(\mathcal U)<\varepsilon^2.
$
We call the original heavy items (those $i\in[n]$ with $p_i\ge\varepsilon^2$) together with the blocks $\mathcal B_1,\ldots,\mathcal B_M,\mathcal U$ the \emph{atomic units} of the blocked instance.

Then for every partition ${\bf S}$ for $\overline{\bf H}$, there
exists another partition $\widehat{\bf S}$ for the same height
vector in which all elements of each block $\mathcal B_i$ and of $\mathcal U$ are mapped to the same leaf, such that
\[
\left|
\mathfrak D(\widehat{\bf S},\overline{\bf H})
-
\mathfrak D({\bf S},\overline{\bf H})
\right|
\le 4\varepsilon .
\]
\end{lemma}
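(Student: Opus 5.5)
The plan is to keep every heavy item ($p_i\ge\varepsilon^2$) on the leaf ${\bf S}$ assigns it to, and to redistribute only the tiny mass, placing each fixed block $\mathcal B_1,\ldots,\mathcal B_M,\mathcal U$ whole on a single leaf. The redistribution will be chosen so that every leaf's total mass moves by only $O(\varepsilon^2)$. Two elementary facts drive the argument.

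\textbf{Fact 1 (Lipschitz bound).} Since $\bigl||a-x|-|a-y|\bigr|\le|x-y|$, we have
\[
\bigl|\mathfrak D(\widehat{\bf S},\overline{\bf H})-\mathfrak D({\bf S},\overline{\bf H})\bigr|\le\sum_{j=1}^L\bigl|\Pr(\widehat{\cS}_j)-\Pr(\cS_j)\bigr|.
\]

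\textbf{Fact 2 (few leaves).} Every leaf of $\overline{\bf H}$ has depth at most $d=\log_2(1/\varepsilon)$, so its dyadic mass is at least $\varepsilon$. Kraft's equality (\ref{eq:tree}) then gives $L\le 1/\varepsilon$.

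It therefore suffices to make every per-leaf mass change at most $4\varepsilon^2$, since summing over at most $1/\varepsilon$ leaves gives the claimed $4\varepsilon$.

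\textbf{Construction.} Let $t_j=\sum_{i\in\cS_j\cap\mathcal T}p_i$ be the tiny mass that ${\bf S}$ places on leaf $j$. Then $\sum_j t_j=\Pr(\mathcal T)$, and this equals the total mass of the blocks. List the blocks in their greedy order $\mathcal B_1,\ldots,\mathcal B_M,\mathcal U$ and lay them consecutively on an interval of length $\Pr(\mathcal T)$. Independently, cut that interval into consecutive pieces of lengths $t_1,\ldots,t_L$. Assign each block to the leaf whose piece contains the block's left endpoint (its starting point).

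Every block has mass $<2\varepsilon^2$. Hence the cumulative block mass assigned to leaves $1,\ldots,j$ differs from $T_j=t_1+\cdots+t_j$ by less than $2\varepsilon^2$. Taking differences of consecutive cumulative errors gives
\[
\bigl|\hat t_j-t_j\bigr|<4\varepsilon^2 \quad\text{for each } j .
\]
This is exactly the per-leaf bound needed. The construction works for an arbitrary ${\bf S}$ with the blocks fixed in advance, because only the pieces depend on ${\bf S}$ and the blocks do not.

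\textbf{Surjectivity (expected main obstacle).} A leaf $j$ of ${\bf S}$ may contain only tiny items with $t_j$ smaller than one block. The rounding can then leave it empty, and $\widehat{\bf S}$ would not be surjective. My first attempt is to note that leaves containing a heavy item are automatically nonempty. For a leaf with no heavy item, I would choose the cut so that this leaf is served by the block whose left endpoint is nearest its piece; one available choice is to place its piece so that it straddles a block endpoint, which the freedom in ordering the leaves allows. Each such fix moves at most one block, of mass $<2\varepsilon^2$, between two adjacent pieces. I would then re-verify that the per-leaf error stays within $4\varepsilon^2$.

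If this does not close cleanly, the fallback is to charge each forced move against the slack. The per-leaf error is strictly below $4\varepsilon^2$ before the fix, and the mass of an all-tiny leaf is itself $<2\varepsilon^2$, so losing or gaining one block should keep each leaf within $4\varepsilon^2$. Summing with Facts 1 and 2 then yields the stated $4\varepsilon$ bound.
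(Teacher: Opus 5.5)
Your main construction is correct, and it takes a different, somewhat cleaner route than the paper's. The paper handles leaves $1,\dots,L-1$ one at a time. Each leaf takes completed blocks until its \emph{own} block mass first reaches $x_j$, and leaf $L$ receives all leftover blocks and $\mathcal U$. Each leaf $j<L$ then overshoots by less than $2\varepsilon^2$, but leaf $L$ can absorb a deficit of order $L\varepsilon^2$. So the paper bounds only the aggregate: it uses $\sum_j\Delta_j=0$ to equate the negative and positive parts, with a case split on whether the greedy ran out, and obtains $\sum_j|\Delta_j|<4L\varepsilon^2$. You round against the cumulative thresholds $T_j$ instead, which controls every prefix. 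The error is in fact one-sided. The blocks given to leaves $1,\dots,j$ are exactly those starting before $T_j$, so they tile an initial segment $[0,e)$ with $T_j\le e<T_j+2\varepsilon^2$. Hence $E_j=\hat T_j-T_j\in[0,2\varepsilon^2)$, so $|\hat t_j-t_j|<2\varepsilon^2$ for every leaf, better than the $4\varepsilon^2$ you state. No conservation argument is needed, and the total error is below $2L\varepsilon^2\le 2\varepsilon$, half the paper's constant.

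You should delete the surjectivity section: it is not needed, and it cannot be carried out in general. The paper's own greedy can also leave leaves empty. The dynamic-programming proposition explicitly permits temporary empty leaves, and surjectivity is restored only afterwards by the seeding lemma. So here $\widehat{\bf S}$ is meant as a possibly non-surjective assignment. Requiring surjectivity literally would make the lemma false. For example, take any $\overline{\bf H}$ with $L\ge 3$ leaves, one item of mass $1-\varepsilon^2/2$, and at least $L-1$ tiny items of total mass $\varepsilon^2/2$ spread over the other leaves in a surjective ${\bf S}$. No completed block forms, so the only atomic units are the heavy item and $\mathcal U$. Every $\widehat{\bf S}$ that keeps $\mathcal U$ on one leaf therefore leaves at least $L-2\ge 1$ leaves empty. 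Your proposed fix presupposes spare blocks, and reordering leaves or shifting cuts cannot create them. The fallback's claim that an all-tiny leaf has mass below $2\varepsilon^2$ is also false in general. Only a leaf that your rounding leaves empty is guaranteed to have $t_j<2\varepsilon^2$.
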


\begin{IEEEproof}
Consider a partition ${\bf S}$ for $\overline{\bf H}$. For each leaf $j\in[L]$, let
\[
x_j \triangleq \Pr(\mathcal T\cap \mathcal S_j)
\]
 denote the total mass of the tiny items assigned to leaf $j$ in the
 partition ${\bf S}$. We now replace this tiny-item assignment with an
 assignment of whole blocks.

 Process the leaves in order $1,2,\ldots,L-1$. For leaf $j$, assign whole
 completed blocks greedily until the accumulated block mass first reaches
 or exceeds $x_j$, or until no unassigned completed blocks remain. Let
 $\widehat{x}_j$ denote the resulting total block mass assigned to leaf
 $j$. Assign all remaining completed blocks together with the residual
 block $\mathcal U$ to the final leaf $L$, and let $\widehat{x}_L$ denote
 the total mass thereby assigned. Set
 $\Delta_j\triangleq \widehat{x}_j-x_j$.

 Every tiny item is reassigned somewhere, so the total tiny-item mass is
 preserved:
 \[
 \sum_{j=1}^L \widehat{x}_j = \Pr(\mathcal T) = \sum_{j=1}^L x_j,
 \qquad\text{i.e.,}\qquad
 \sum_{j=1}^L \Delta_j = 0.
 \]
 We bound $\sum_j|\Delta_j|$ via this conservation. At each leaf $j<L$
 where the greedy met the target $x_j$, the overshoot is bounded by the
 mass of the last block assigned, so $\Delta_j\in[0,2\varepsilon^2)$; at
 every other leaf $j<L$ the greedy ran out of completed blocks, giving
 $\Delta_j\le 0$. At leaf $L$ there are two cases. If greedy did not run out at any $j<L$,
 then every $\Delta_{j<L}\ge 0$ and conservation
 $\sum_{j=1}^L\Delta_j=0$ forces
 $\Delta_L=-\sum_{j<L}\Delta_j\le 0$, so $\Delta_L$ contributes nothing to
 the positive part. If greedy did run out at some $j_0<L$, then no
 completed blocks remain by leaf $L$, so
 $\widehat{x}_L=\Pr(\mathcal U)<\varepsilon^2$ and hence
 $\Delta_L<\varepsilon^2$ in this case. Summing the positive part,
 \[
 \sum_{j=1}^L \max(\Delta_j,0) \;<\; (L-1)\cdot 2\varepsilon^2+\varepsilon^2 \;<\; 2L\varepsilon^2.
 \]
 Because $\sum_j\Delta_j=0$, the same bound applies to the negative part,
 so
 \[
 \sum_{j=1}^L |x_j-\widehat{x}_j| \;=\; \sum_{j=1}^L|\Delta_j| \;<\; 4L\varepsilon^2 \;\le\; 4\varepsilon,
 \]
 where the last inequality uses $L\le 2^d=1/\varepsilon$.

 Now let $y_j$ denote the total mass of all items outside $\mathcal T$
 assigned to leaf $j$. Then the original and modified leaf masses are
 $y_j+x_j$ and $y_j+\widehat{x}_j$, respectively. Using the inequality
 $\bigl||a-c|-|b-c|\bigr|\le |a-b|$, we obtain
 \begin{align*}
 \left|
 \mathfrak D(\widehat{\bf S},\overline{\bf H})
 -
 \mathfrak D({\bf S},\overline{\bf H})
 \right|
 &\le
 \sum_{j=1}^L
 \left|
 (y_j+\widehat{x}_j)-(y_j+x_j)
 \right|\\
 &=\sum_{j=1}^L |x_j-\widehat{x}_j|\\
 &\le 4\varepsilon.
 \end{align*}
\end{IEEEproof}
\medskip

After applying Lemma~\ref{lem:blocking}, the algorithm works with a collection of \emph{atomic units} consisting of (i) each original item of mass $\ge \varepsilon^2$ and (ii) each block produced from items of mass below $\varepsilon^2$. Every atomic unit except possibly the final residual block has mass at least $\varepsilon^2$, so their total count, denoted by $K$, is bounded by $\varepsilon^{-2}+1$,
a constant for fixed $\varepsilon$.

\subsection{Atomic Assignment via Dynamic Programming}\label{sec:large-item-assignment}

After truncation and blocking, what is left for each candidate height vector $\overline{\mathbf H}$ is a finite combinatorial problem. Assign $K\le 1/\varepsilon^2+1$ atomic units to $L\le 1/\varepsilon$ leaves so as to minimize divergence. Even in this finite form, brute force is exponential in $1/\varepsilon$ 
as there are $L^K \le (1/\varepsilon)^{1/\varepsilon^2}$ possible assignments.

Divergence depends on the assignment only through the resulting leaf-mass vector, not on which atomic unit lands where. Discretizing each atomic unit's mass to the nearest multiple of $\delta\triangleq\varepsilon^3/2$ collapses the state space to a polynomial-size lattice of leaf-mass vectors, on which the assignment problem reduces to a standard knapsack-style dynamic program.

\paragraph{Discretization}
Let the atomic units be indexed by $u\in[K]$, where $K\le \varepsilon^{-2}+1$
and let $q_u$ denote the probability mass of atomic unit $u$. Set\vspace{-0.75ex}
\[
\delta \triangleq \frac{\varepsilon^3}{2},
\quad
w_u \triangleq \left\lfloor \frac{q_u}{\delta}\right\rfloor
\in \mathbb Z_{\ge 0},
\quad
\widehat q_u \triangleq \delta\, w_u, \quad W \triangleq \sum_{u=1}^K w_u.\vspace{-0.75ex}\]
Since the atomic units partition $[n]$, $\sum_{u=1}^K q_u =1$, so\vspace{-0.75ex}
\[
W\le \frac{1}{\delta}=\frac{2}{\varepsilon^3}.\vspace{-0.75ex}
\]

Now consider any assignment
$
{\bf A}=(A_1,\ldots,A_L)
$
of the atomic units to the $L$ leaves of $\overline{\mathbf H}$, where
$A_j\subseteq [K]$ denotes the set of atomic units assigned to leaf $j$.
(We use ${\bf A}$ for the assignment of atomic units to distinguish it
from the assignment ${\bf S}$ of the original tokens.)
Define the exact and rounded masses of leaf $j$ by\vspace{-0.75ex}
\begin{align*}
m_j({\bf A})\triangleq \sum_{u\in A_j} q_u,
\qquad
\widehat m_j({\bf A})\triangleq \sum_{u\in A_j} \widehat q_u.\vspace{-0.75ex}
\end{align*}

Since $0\le q_u-\widehat q_u<\delta$ for every $u$, we have\vspace{-0.75ex}
\begin{align*}
\sum_{j=1}^L |m_j({\bf A})-\widehat m_j({\bf A})|
&=\sum_{j=1}^L \sum_{u \in A_j} \left( q_u - \widehat q_u \right)
=\sum_{u=1}^K (q_u-\widehat q_u)\\
&< K\delta \le \left(\frac{1}{\varepsilon^2}+1\right)\frac{\varepsilon^3}{2}
\hspace{-2.5pt}=\hspace{-2.5pt} \frac{\varepsilon}{2}\hspace{-0.8pt}+\hspace{-0.8pt}\frac{\varepsilon^3}{2}
\le \varepsilon,
\end{align*}
where the last inequality holds for $0<\varepsilon\le 1$.
Thus replacing the exact atomic masses by their rounded versions changes
the total leaf-mass vector by at most $\varepsilon$.

\paragraph{Dynamic program}
Although the assignment is to all $L$ leaves, the rounded mass of the
final leaf is determined by the first $L-1$ leaves because the total
rounded mass is fixed. For this reason, the dynamic program stores only
the first $L-1$ rounded leaf-loads.

Order the leaves as $1,2,\ldots,L$. For 
\[\vspace{-0.75ex}
\mathbf a=(a_1,\ldots,a_{L-1})\in \{0,1,\ldots,W\}^{L-1},
\]
interpret $a_j$ as the rounded load assigned to leaf $j$, measured in
units of $\delta$. Then the rounded load of leaf $L$ is implicitly
$
W-\sum_{j=1}^{L-1} a_j,
$
corresponding to a rounded mass
$
{\delta\cdot\big(W-\sum_{j=1}^{L-1} a_j\big)}.
$

We define a Boolean DP table
$
F_t(a_1,\ldots,a_{L-1}),
$
which is true if and only if the first $t$ atomic units can be assigned
to the $L$ leaves so that the rounded loads on the first $L-1$ leaves
are exactly $(a_1,\ldots,a_{L-1})$.
The initialization is
$
F_0(0,\ldots,0)=\mathrm{true},
$
and all other states are false. For atomic unit $t+1$, the transition
places it on one of the $L$ leaves:
\begin{itemize}
    \item if it is placed on leaf $j\in[L-1]$, then the $j$th coordinate
    increases by $w_{t+1}$;
    \item if it is placed on leaf $L$, then the state vector is unchanged.
\end{itemize}
For this setting we store the individual assignments of each of the
reachable states. For a reachable terminal state $\mathbf a$, the corresponding rounded
leaf masses are
\[
\delta a_1,\delta a_2,\ldots,\delta a_{L-1},
\quad
\delta\bigg(W-\sum_{j=1}^{L-1} a_j\bigg).
\]

Hence for any assignment ${\bf A}$ corresponding to the terminal state
$\mathbf a$, we define the rounded objective
\[
\widehat{\mathfrak D}({\bf A},\overline{\bf H})
\triangleq
\sum_{j=1}^{L-1}\left|2^{-\overline h_j}-\delta a_j\right|
+
\bigg|2^{-\overline h_L}
-\delta\bigg(W-\sum_{j=1}^{L-1}a_j\bigg)\bigg|.
\]
Moreover, by the reverse triangle inequality and the preceding bound on
the rounded leaf-mass error,\vspace{-0.75ex}
\begin{equation}
\label{eq:rounded-objective-gap}
\left|
\mathfrak D({\bf A},\overline{\bf H})-\widehat{\mathfrak D}({\bf A},\overline{\bf H})
\right|
\le
\sum_{j=1}^L |m_j({\bf A})-\widehat m_j({\bf A})|
\le
\varepsilon.\vspace{-0.75ex}
\end{equation}
The dynamic program returns a terminal state minimizing the rounded
objective, together with a corresponding explicit assignment of all
atomic units to all $L$ leaves.

\paragraph{Illustrative Example of the DP State Space}
To build intuition for the state-space compression and the transition function $F_t$, consider a minimal toy instance. Suppose we choose an error tolerance $\varepsilon = \sqrt[3]{0.002} \approx 0.126$. This comfortably bounds the required tree depth $d \le \log_2(1/\varepsilon) \approx 2.98$, allowing us to legally construct a tree with $L=3$ leaves at depths $h_1=1$, $h_2=2$, and $h_3=2$. The corresponding dyadic targets $2^{-h_j}$ are $0.50$, $0.25$, and $0.25$. 

We are given $K=3$ atomic units with true probability masses $m_1=0.50$, $m_2=0.30$, and $m_3=0.20$. Following the algorithm strictly, we apply the discretization step $\delta = \varepsilon^3/2 = 0.001$. The true masses are mapped to integer weights $w_u = \lfloor m_u / \delta \rfloor$, yielding $w_1=500$, $w_2=300$, and $w_3=200$. The fixed total integer weight is $W=1000$. The dyadic targets are analogously scaled by $1/\delta$ to establish the integer targets $T_1=500$, $T_2=250$, and $T_3=250$.

The dynamic program tracks the integer loads of the first $L-1=2$ leaves. A state $(a_1, a_2)$ is reachable after assigning $t$ items if the boolean function $F_t(a_1, a_2) = \mathrm{true}$. The DP initializes with $F_0(0,0) = \mathrm{true}$. As items are assigned, each reachable coordinate transitions to at most three new coordinates (adding $w_t$ to $a_1$, to $a_2$, or to neither if placed in Leaf 3). 

Figure~\ref{fig:dp-lattice} visualizes this state-space expansion. For visual clarity, the grid axes are scaled in units of 100. The colors represent the timeline of the expansion: $t=1$ (blue), $t=2$ (red), and $t=3$ (green).

\begin{figure}[htbp]
\centering
\begin{tikzpicture}[scale=0.6, every node/.style={font=\scriptsize}]
  \draw[step=1cm, gray!30, very thin] (-0.5,-0.5) grid (8.5,6.5);
  \draw[thick, ->] (-0.5,0) -- (8.5,0) node[right] {$a_1$ (Leaf 1, $\times 100$)};
  \draw[thick, ->] (0,-0.5) -- (0,6.5) node[above] {$a_2$ (Leaf 2, $\times 100$)};
  
  \foreach \x in {1,2,3,4,5,6,7,8} \draw (\x, 1pt) -- (\x, -3pt) node[anchor=north] {\x};
  \foreach \y in {1,2,3,4,5,6} \draw (1pt, \y) -- (-3pt, \y) node[anchor=east] {\y};

  \tikzset{
    st0/.style={circle, fill=black, inner sep=1.5pt},
    st1/.style={circle, fill=blue, inner sep=1.5pt},
    st2/.style={circle, fill=red, inner sep=1.5pt},
    st3/.style={circle, fill=green!60!black, inner sep=1.5pt},
    opt/.style={rectangle, fill=orange, draw=black, thick, inner sep=2.5pt},
    e1/.style={->, blue, thick, shorten >=2pt, shorten <=2pt},
    e2/.style={->, red, thick, shorten >=2pt, shorten <=2pt},
    e3/.style={->, green!60!black, thick, shorten >=2pt, shorten <=2pt}
  }

  \node[st0] (n00) at (0,0) {};
  \node[below left] at (n00) {$F_0$};

  \node[st1] (n50) at (5,0) {};
  \node[st1] (n05) at (0,5) {};
  \draw[e1] (n00) -- (n50);
  \draw[e1] (n00) -- (n05);
  \draw[e1, out=15, in=70, looseness=10] (n00) to (n00); 

  \node[st2] (n80) at (8,0) {};
  \node[st2] (n53) at (5,3) {};
  \draw[e2] (n50) -- (n80);
  \draw[e2] (n50) -- (n53);
  \draw[e2, out=15, in=70, looseness=10] (n50) to (n50); 
  
  \node[st3] (n73) at (7,3) {};
  \node[st3] (n55) at (5,5) {};
  \node[opt] (optNode) at (5,3) {}; 
  
  \draw[e3] (n53) -- (n73);
  \draw[e3] (n53) -- (n55);
    \draw[e3, out=15, in=70, looseness=10] (n53) to (n53); 
  \matrix [draw, fill=white, anchor=north west] at (0.5, 6.2) {
    \node[st1, label=right:Step 1 ($w_1{=}500$)] {}; \\
    \node[st2, label=right:Step 2 ($w_2{=}300$)] {}; \\
    \node[st3, label=right:Step 3 ($w_3{=}200$)] {}; \\
    \node[opt, label=right:Optimum] {}; \\
  };
\end{tikzpicture}
\vspace{-2ex}
\caption{2D DP state-space expansion for $L=3$. Arrows denote state transitions via placing an item in Leaf 1 (horizontal), Leaf 2 (vertical), or Leaf 3 (stationary). The final optimum rests at an integer coordinate of $(500,300)$.}
\label{fig:dp-lattice}
\end{figure}

At the terminal step $t=K=3$, the algorithm evaluates the discrete divergence $\widehat{\mathfrak D}$ for \emph{all} valid terminal states. By implicitly defining the load of the final leaf as $W-\sum_{j=1}^{L-1}a_j$, the DP calculates the objective directly from the tracked coordinates:
\[
\widehat{\mathfrak D}({\bf A},\overline{\bf H})
\triangleq
\sum_{j=1}^{L-1}\left|2^{-h_j}-\delta a_j\right|
+
\bigg|2^{-h_L}
-\delta\bigg(W-\sum_{j=1}^{L-1}a_j\bigg)\bigg|
\]
Table~\ref{tab:dp-example} illustrates this exact calculation for the terminal nodes branching from $F_2(500,300)$. Evaluating this formula strictly identifies $(500, 300)$ as the optimal assignment.

\begin{table*}[htbp]
\centering
\caption{Evaluation of Selected Terminal States where $F_3(a_1, a_2) = \mathrm{true}$}
\label{tab:dp-example}
\renewcommand{\arraystretch}{1.3}
\begin{tabular}{@{}ccc@{}}
\toprule
\textbf{State $(a_1, a_2)$} & \textbf{Calculation for $\widehat{\mathfrak D}({\bf A},\overline{\bf H})$} & \textbf{Total $\widehat{\mathfrak D}$} \\
\midrule
$F_3(700, 300)$ & $|0.50{-}0.001(700)| + |0.25{-}0.001(300)| + |0.25{-}0.001(0)|$ & $0.50$ \\
$F_3(500, 500)$ & $|0.50{-}0.001(500)| + |0.25{-}0.001(500)| + |0.25{-}0.001(0)|$ & $0.50$ \\
$\mathbf{F_3(500, 300)}$ & $\mathbf{|0.50{-}0.001(500)| + |0.25{-}0.001(300)| + |0.25{-}0.001(200)|}$ & $\mathbf{0.10}$ \\
\bottomrule
\end{tabular}
\end{table*}

\begin{proposition}\label{prop:large-item-dp}
For every fixed candidate height vector $\overline{\mathbf H}$ of maximum
depth at most $\log_2(1/\varepsilon)$, one can compute an assignment of
the atomic units of the blocked instance to the leaves of $\overline{\mathbf H}$ in time\vspace{-0.75ex}
\[
O(n)\cdot
\exp\!\left(
O\!\left(\frac{\log(1/\varepsilon)}{\varepsilon}\right)
\right)\vspace{-0.75ex}
\]
whose true objective value is within $2\varepsilon$ of the optimum
blocked assignment for that fixed height vector, where temporary empty
leaves are permitted at this stage.
\end{proposition}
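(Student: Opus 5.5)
The proof splits into a correctness claim, that the DP output is within $2\varepsilon$ of the blocked optimum, and a running-time claim. The DP already defined in this subsection does the work for both.

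\emph{Correctness.} Let ${\bf A}^\star$ be an optimal blocked assignment for $\overline{\bf H}$, with empty leaves allowed, and let ${\bf A}$ be the assignment the DP returns. The DP explores every placement of each atomic unit on each of the $L$ leaves, so every assignment's rounded load vector $(\widehat m_1,\ldots,\widehat m_{L-1})/\delta$ is a reachable terminal state. In particular the state of ${\bf A}^\star$ is reachable. Since the DP minimizes $\widehat{\mathfrak D}$ over all reachable terminal states,
\[
\widehat{\mathfrak D}({\bf A},\overline{\bf H})\le \widehat{\mathfrak D}({\bf A}^\star,\overline{\bf H}).
\]
Applying \eqref{eq:rounded-objective-gap} once to ${\bf A}$ and once to ${\bf A}^\star$ gives
\[
\mathfrak D({\bf A},\overline{\bf H})\le \widehat{\mathfrak D}({\bf A},\overline{\bf H})+\varepsilon\le \widehat{\mathfrak D}({\bf A}^\star,\overline{\bf H})+\varepsilon\le \mathfrak D({\bf A}^\star,\overline{\bf H})+2\varepsilon.
\]
The only subtlety is consistency of the implicit last leaf. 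The rounded load of leaf $L$ is exactly $W-\sum_{j<L}a_j$, because rounded loads sum to $W$. So $\widehat{\mathfrak D}$ of a terminal state coincides with the rounded objective defined via $\widehat m_j({\bf A})$, and \eqref{eq:rounded-objective-gap} applies. Empty leaves cause no trouble, since the DP never forbids them.

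\emph{Running time.} Forming the atomic units (sorting is not needed; a single greedy pass suffices) and computing the $w_u$ takes $O(n)$. The state space has size at most $(W+1)^{L-1}\le (2/\varepsilon^3+1)^{1/\varepsilon}=\exp(O(\log(1/\varepsilon)/\varepsilon))$, using $L\le 2^d=1/\varepsilon$ and $W\le 2/\varepsilon^3$. There are $K\le \varepsilon^{-2}+1$ stages, each state has $L$ transitions, and evaluating $\widehat{\mathfrak D}$ at a terminal state costs $O(L)$. All of these are polynomial in $1/\varepsilon$ and are absorbed into the exponential.

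\emph{Main obstacle: storing witnesses.} Storing an explicit assignment per state naively costs $K$ per state. Instead, I would keep one back-pointer per state (the leaf chosen for unit $t$ and the predecessor state) and recover the assignment by backtracking. This costs $O(K)$ extra per stage, still polynomial in $1/\varepsilon$. Mapping atomic units back to tokens then adds $O(n)$. Altogether the bound is $O(n)+\exp(O(\log(1/\varepsilon)/\varepsilon))$, which is at most $O(n)\cdot\exp(O(\log(1/\varepsilon)/\varepsilon))$.
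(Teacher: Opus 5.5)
Your proposal is correct and follows essentially the same argument as the paper. You show that the optimal assignment's rounded load vector is a reachable terminal state, take the DP's minimization of $\widehat{\mathfrak D}$ over reachable states, apply \eqref{eq:rounded-objective-gap} twice, and bound the number of states by $(W+1)^{L-1}$. Your back-pointer witness recovery and additive $O(n)$ accounting are only implementation refinements of the paper's choice to store assignments per state and its use of $K\le n$ to absorb the stage count.
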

\begin{IEEEproof}
The number of DP states is at most
 \[
 (W+1)^{L-1}.
 \]
 Since $W\le 2\varepsilon^{-3}$ and $L\le \varepsilon^{-1}$, this gives
 \[
 (W+1)^{L-1}
 \le (2\varepsilon^{-3}+1)^{1/\varepsilon}
 =
 \exp\!\left( O\!\left(\frac{\log(1/\varepsilon)}{\varepsilon}\right)
\right).
 \]
 Each state has at most $L\le 1/\varepsilon$ outgoing transitions, and
 the number of atomic units satisfies
 \[
 K\le \frac{1}{\varepsilon^2}+1\le n.
 \]
 Hence the total running time is
 \[
 O\!\left(KL(W+1)^{L-1}\right)
 =
 O(n)\cdot
 \exp\!\left(
 O\!\left(\frac{\log(1/\varepsilon)}{\varepsilon}\right)
 \right).
 \]

 Let ${\bf A}^\ast$ be an optimal assignment for the blocked atomic instance under the true objective $\mathfrak D(\cdot,\overline{\bf H})$. For each $t\in\{0,\dots,K\}$, let
 \[
 a^{(t)}_j := \sum_{\substack{u\le t\\ u\in A^\ast_j}} w_u,
 \qquad j=1,\dots,L-1,
 \]
 denote the rounded loads on the first $L-1$ leaves induced by the first $t$ atomic units under ${\bf A}^\ast$. We claim that the DP state
 \[
 F_t\bigl(a^{(t)}_1,\dots,a^{(t)}_{L-1}\bigr)
 \]
 is reachable for every $t$. This is immediate by induction: $F_0(0,\dots,0)=\mathrm{true}$, and the transition for $t+1$ matches the assignment of atomic unit $t+1$ in ${\bf A}^\ast$. Hence the terminal rounded load vector induced by ${\bf A}^\ast$ is reachable. Since the DP returns a reachable terminal state minimizing $\widehat{\mathfrak D}(\cdot,\overline{\bf H})$, it follows that
 \[
 \widehat{\mathfrak D}({\bf A}_{\mathrm{DP}},\overline{\bf H}) \le \widehat{\mathfrak D}({\bf A}^\ast,\overline{\bf H}).
 \]
 Using \eqref{eq:rounded-objective-gap} for both ${\bf A}_{\mathrm{DP}}$ and ${\bf A}^\ast$, we conclude
 \begin{align*}
 \mathfrak D({\bf A}_{\mathrm{DP}},\overline{\bf H})
 &\le \widehat{\mathfrak D}({\bf A}_{\mathrm{DP}},\overline{\bf H})+\varepsilon \\
 &\le \widehat{\mathfrak D}({\bf A}^\ast,\overline{\bf H})+\varepsilon \\
 &\le \mathfrak D({\bf A}^\ast,\overline{\bf H})+2\varepsilon.
 \end{align*}
 \end{IEEEproof}

\subsection{Feasibility Repair}\label{sec:repair}

The truncation reduction (Lemma~\ref{lem:truncation}) preserves divergence, but a depth-truncated candidate may have $\mathfrak{R}<R$. Whenever such a candidate has a leaf at depth exactly $d$, we can supply the missing rate by replacing that leaf with a complete subtree of $T_L\triangleq 2^{R/\varepsilon}$ uniform sub-leaves. Small items from $\cS_\text{small}$ play two roles: filling each new sub-leaf and absorbing the migrated mass. The construction is cheap precisely because the deepest leaf can be chosen to carry the smallest mass, making its dyadic-target perturbation small.

\begin{lemma}[Monotone optimal ordering]\label{lem: monotonicity}
Fix a height vector $\bf H$, and reindex the leaves so that
$h_1\le h_2\le \cdots \le h_L.$
Then there exists an optimal partition $\bf S^*$ minimizing $\mathfrak{D}({\bf S},{\bf H})$ 
such that\vspace{-0.75ex}
\[
\Pr(\cS_1^*)\ge \Pr(\cS_2^*)\ge \cdots \ge \Pr(\cS_L^*).\vspace{-0.75ex}
\]
\end{lemma}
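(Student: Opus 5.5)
The proof is a standard exchange (rearrangement) argument. Swapping the \emph{entire} sets assigned to two leaves never breaks feasibility and never increases $\mathfrak D$ when it moves larger mass onto a larger dyadic target. Fix ${\bf H}$ with $h_1\le\cdots\le h_L$, and write $t_j\triangleq 2^{-h_j}$, so that $t_1\ge t_2\ge\cdots\ge t_L$. The number of surjective partitions is finite, so an optimal partition exists. Among all optimal partitions, choose ${\bf S}^*$ minimizing the number of inversions, i.e.\ pairs $i<j$ with $\Pr(\cS_i^*)<\Pr(\cS_j^*)$. The claim is that this ${\bf S}^*$ has no inversions, which is exactly the desired monotone ordering.

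The key step is a two-leaf inequality. Let $a\ge b$ be targets and $x<y$ be masses. Then
\[
|a-y|+|b-x|\le |a-x|+|b-y|.
\]
To prove it, note that $a-x\ge a-y$, $a-x\ge b-x$, and $(a-y)+(b-x)=(a-x)+(b-y)$. Hence the pair $(a-y,\,b-x)$ lies in the interval between $b-y$ and $a-x$ and has the same sum as that pair. Convexity of $z\mapsto|z|$ then gives the inequality; concretely, write $a-y=\lambda(a-x)+(1-\lambda)(b-y)$ and $b-x=(1-\lambda)(a-x)+\lambda(b-y)$ with $\lambda\in[0,1]$, and apply the triangle inequality to each term. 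A short case check on the signs works equally well if one prefers to avoid convexity.

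Now suppose ${\bf S}^*$ had an inversion. Then it has an adjacent one: some $i$ with $\Pr(\cS_i^*)<\Pr(\cS_{i+1}^*)$. Define ${\bf S}'$ by swapping $\cS_i^*$ and $\cS_{i+1}^*$ and leaving all other leaves unchanged. The swap preserves the facts that ${\bf S}'$ is a partition of $[n]$, that it is surjective, and that every leaf is nonempty. Only the terms for leaves $i$ and $i+1$ of $\mathfrak D$ change. Applying the inequality with $a=t_i$, $b=t_{i+1}$, $x=\Pr(\cS_i^*)$ and $y=\Pr(\cS_{i+1}^*)$ gives $\mathfrak D({\bf S}',{\bf H})\le\mathfrak D({\bf S}^*,{\bf H})$, so ${\bf S}'$ is also optimal. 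Swapping the masses at adjacent positions $i$ and $i+1$ removes exactly the inversion at $(i,i+1)$ and leaves every other pair's status unchanged, so ${\bf S}'$ has one fewer inversion. This contradicts the choice of ${\bf S}^*$.

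Ties $h_i=h_{i+1}$ cause no difficulty, since the inequality holds with $a=b$. The only real content is the two-leaf inequality; everything else is bookkeeping guaranteeing that the swap keeps the partition admissible and that the inversion-counting argument terminates.
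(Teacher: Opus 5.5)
Your proof is correct and is essentially the paper's exchange argument: swap the entire sets on two leaves whose masses are out of order relative to the dyadic targets, and observe that $\mathfrak D$ does not increase. The differences are cosmetic. You prove the two-leaf inequality by convexity, where the paper uses that $\phi(z)=|z-x|-|z-y|$ is nonincreasing for $x\ge y$. You also choose an optimum with the fewest inversions and use adjacent swaps, which makes explicit the termination that the paper covers only with ``repeating finitely often.''
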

 \begin{IEEEproof}
 Let $x_j\triangleq 2^{-h_j}$, so $x_1\ge x_2\ge \cdots \ge x_L$. Take any optimal partition. If there exist indices $i<j$ with $\Pr(\cS_i)<\Pr(\cS_j)$, set
 \[
 a\triangleq \Pr(\cS_i),\ \ b\triangleq \Pr(\cS_j),\ \ x\triangleq x_i,\ \ y\triangleq x_j.
 \]
 Since $a\le b$ and $x\ge y$, the function $\phi(z)\triangleq |z-x|-|z-y|$ is nonincreasing in $z$, and so
 \[
 |a-x|+|b-y|-|a-y|-|b-x|=\phi(a)-\phi(b)\ge 0.
 \]
 Hence swapping the assignments of $\cS_i$ and $\cS_j$ does not increase the objective. Repeating finitely often yields an optimal partition with nonincreasing leaf masses.
 \end{IEEEproof}

 Therefore, we can assume that the deepest leaf carry the smallest mass. 
 In particular, when its mass 
 approximates the dyadic target $\varepsilon=2^{-d}$, the subtree replacement contributes only $O(\varepsilon)$ to the divergence, as formalized in the next lemma. 

\begin{lemma}[Feasibility repair]\label{lem:repair}
Let $\overline{\bf H}$ be a height vector of maximum depth $d=\log_2(1/\varepsilon)$ with $L\le 1/\varepsilon$ leaves and at least one leaf at depth exactly $d$, and let ${\bf S}$ be a surjective partition for $\overline{\bf H}$. Suppose Assumptions~1--3 hold and that there is a subset $\cS\subseteq \cS_\text{small}$ of cardinality $T_L\triangleq 2^{R/\varepsilon}$. Then there exists $(\tilde{\bf S},\tilde{\bf H})$ feasible for Problem~1 (i.e., $\mathfrak{R}(\tilde{\bf S},\tilde{\bf H})\ge R$) with
\[
\mathfrak D(\tilde{\bf S},\tilde{\bf H})\le \mathfrak D({\bf S},\overline{\bf H})+4\varepsilon.
\]
\end{lemma}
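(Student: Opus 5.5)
Pick a leaf $j^\star$ of $\overline{\bf H}$ at depth exactly $d$. By Lemma~\ref{lem: monotonicity} (after reindexing) we may take $j^\star$ to be the deepest leaf carrying the smallest mass, though the bound below will not depend on this. Replace $j^\star$ by a complete binary subtree of depth $r\triangleq R/\varepsilon$ (rounded up to an integer if needed), giving $T_L=2^{R/\varepsilon}$ sub-leaves, each at depth $d+r$ with dyadic target $2^{-d}/T_L=\varepsilon\,2^{-R/\varepsilon}=\theta$. Call the new height vector $\tilde{\bf H}$. It still satisfies Kraft's equality~\eqref{eq:tree}, since the subtree's targets sum to $2^{-d}$.

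\textbf{Rate.} The rate check is a direct computation. The sub-leaves alone contribute
\[
2^{-d}(d+r)\ge \varepsilon\cdot \frac{R}{\varepsilon}=R,
\]
and every other leaf contributes a nonnegative amount. Hence $\mathfrak R(\tilde{\bf S},\tilde{\bf H})\ge R$ regardless of the partition.

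\textbf{Partition.} Take the given subset $\cS\subseteq\cS_{\text{small}}$ with $|\cS|=T_L$ and place one item of $\cS$ on each sub-leaf; this makes every sub-leaf nonempty. The remaining items of $\cS_{j^\star}\setminus\cS$ must go somewhere. I would put them all on one sub-leaf, so $\cS_{j^\star}$ is redistributed among the sub-leaves.

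The complication is that items of $\cS$ may have belonged to other leaves $j\neq j^\star$, so removing them changes those leaf masses. I would handle this by accounting for leaf-mass changes in $\ell_1$ via $\bigl||a-c|-|b-c|\bigr|\le|a-b|$. The total mass moved is $\Pr(\cS)\le T_L\theta=\varepsilon$, and each moved unit changes at most two leaf masses, giving a perturbation of at most $2\varepsilon$. One caveat: if a donor leaf is emptied, swap in another unused small item. Assumption~2 supplies an extra $1/\varepsilon\ge L$ small items for this purpose, so surjectivity is preserved.

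\textbf{Divergence on the subtree.} Before redistribution, the old leaf contributed $|2^{-d}-\Pr(\cS_{j^\star})|$. The new contribution, summed over the sub-leaves, is at most $\Pr(\cS_{j^\star})+2^{-d}$ by the crude bound $\sum|t-m|\le\sum t+\sum m$. This exceeds the old term by at most $2\min(2^{-d},\Pr(\cS_{j^\star}))\le 2\varepsilon$. Adding this to the $2\varepsilon$ from moving $\cS$ gives the claimed $4\varepsilon$.

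\textbf{Main obstacle.} The step needing the most care is the bookkeeping: the moved items of $\cS$ arriving at the sub-leaves are counted in the subtree mass, and emptied donor leaves must be refilled. In the worst case this costs $\Pr(\cS)$ on the donor side plus the subtree change, and I would add the two terms carefully so the total stays $\le 4\varepsilon$. A cleaner alternative is to first move $\cS$ into $\cS_{j^\star}$, costing $\le 2\varepsilon$, and then apply the crude subtree bound, which costs $\le 2\cdot 2^{-d}=2\varepsilon$ relative to that leaf.
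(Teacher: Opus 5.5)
Your argument, in the ``cleaner alternative'' ordering you end with, is the paper's proof. You move $\cS$ into a depth-$d$ leaf at cost at most $2\Pr(\cS)\le 2T_L\theta=2\varepsilon$. You then split that leaf into $T_L$ sub-leaves of target $\theta$, with one item of $\cS$ on each and the leftover mass on one of them, and read off the rate as $2^{-d}(d+\log_2 T_L)\ge R$. The only real difference is the expansion estimate. The paper computes the new subtree contribution exactly as $|(m-\varepsilon)+\eta|+\eta$ with $\eta=\sum_{r\ge 2}(\theta-q_r)\in[0,\varepsilon)$. You use the crude bound $\sum_r|\theta-m_r|\le\varepsilon+M'$, which gives the same $2\min(\varepsilon,M')\le 2\varepsilon$ and does not depend on how the mass is split among sub-leaves. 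Keep the second ordering: the first compares against the original $\Pr(\cS_{j^\star})$ and so undercounts the subtree mass contributed by incoming items of $\cS$.

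The genuine gap is the donor-refill caveat. The paper sidesteps this issue: the lemma treats feasibility as just $\mathfrak R\ge R$, and in the algorithm $\cS$ is taken disjoint from the seeds, so no leaf can empty. In your version, each refill moves a small item and costs up to $2\theta$, so up to $2(L-1)\theta<2\cdot 2^{-R/\varepsilon}$ in total. Your budget $2\varepsilon+2\varepsilon$ is already tight and has no room for this. You would also need to ensure a refill's source leaf is not emptied in turn. As written, you prove either $4\varepsilon$ without surjectivity or $4\varepsilon+2(L-1)\theta$ with it. There are two easy repairs.
\begin{itemize}
\item[(i)] The lemma only needs some such $\cS$ to exist, so you can choose it. First reserve one item from each leaf $j\neq j^\star$ that consists solely of small items; this is at most $L-1\le 1/\varepsilon$ items. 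Then pick $\cS$ among the remaining $\ge T_L+1/\varepsilon-(L-1)\ge T_L$ small items, which Assumption~2 guarantees. No donor ever empties, so no refill is needed.
\item[(ii)] Alternatively, use the paper's exact expansion estimate. Migration plus expansion then cost at most $2\Pr(\cS)+2\eta=2q_1+2(T_L-1)\theta\le 2\varepsilon$, which leaves $2\varepsilon\ge 2(L-1)\theta$ (by Assumption~3) to pay for refills.
\end{itemize}
A minor point: rounding $R/\varepsilon$ up creates $2^{\lceil R/\varepsilon\rceil}>T_L$ sub-leaves, more than $\cS$ can fill. Rounding down works instead, since $\varepsilon(d+\lfloor R/\varepsilon\rfloor)\ge R+\varepsilon(d-1)\ge R$. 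Like the paper, you implicitly take $d$ to be an integer.
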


\begin{IEEEproof}
 By Lemma~\ref{lem: monotonicity} we may reindex the leaves of $\overline{\bf H}$ so that
 \[
 \overline h_1\le \cdots\le \overline h_L=d
 \quad\text{and}\quad
 \Pr(\cS_1)\ge \cdots \ge \Pr(\cS_L).
 \]
 The deepest leaf $L$ has dyadic target mass $2^{-d}=\varepsilon$.

 \emph{Migration.} Each item in $\cS_\text{small}$ has mass at most $\varepsilon\cdot 2^{-R/\varepsilon}=\varepsilon/T_L$, so $\Pr(\cS)\le T_L\cdot\varepsilon/T_L=\varepsilon$. Move every element of $\cS$ to leaf $L$ to obtain a new partition ${\bf S}''$. Each individual move changes the mass of at most two leaves, so
 \begin{align}\label{eq:repair-mig}
 \left|\mathfrak D({\bf S}'',\overline{\bf H})-\mathfrak D({\bf S},\overline{\bf H})\right|
 \le 2\Pr(\cS)\le 2\varepsilon.
 \end{align}
 Set $m\triangleq \Pr({\bf S}''_L)$ and $\beta\triangleq |m-\varepsilon|$.

 \emph{Expansion.} Replace leaf $L$ with a complete binary subtree of $T_L$ leaves. Each new leaf has dyadic target $t\triangleq \varepsilon/T_L$. Write $\cS=\{i_1,\ldots,i_{T_L}\}$ and $q_r\triangleq p_{i_r}\le t$. Assign item $i_r$ to leaf $r\in[T_L]$, and assign all elements of ${\bf S}''_L\setminus \cS$ to leaf $1$. Let $(\tilde{\bf S},\tilde{\bf H})$ denote the resulting tree, and define
\[
 \eta\triangleq \sum_{r=2}^{T_L}(t-q_r)\in\bigl[0,(T_L-1)t\bigr)\subseteq[0,\varepsilon).
 \]
 For $r\neq 1$, leaf $r$ contributes $|q_r-t|=t-q_r$ to the divergence, while leaf $1$ has mass $m-\sum_{r=2}^{T_L} q_r$ and contributes $|(m-\varepsilon)+\eta|$. The total contribution of the expanded subtree is therefore $|(m-\varepsilon)+\eta|+\eta$, while the original contribution of leaf $L$ was $\beta=|m-\varepsilon|$. Hence
 \begin{align}\label{eq:repair-exp}
\bigl|\mathfrak D(\tilde{\bf S},\tilde{\bf H})-\mathfrak D({\bf S}'',\overline{\bf H})\bigr|
 &\le \bigl||(m-\varepsilon)+\eta|-|m-\varepsilon|\bigr|+\eta\nonumber\\
 &\le 2\eta< 2\varepsilon.
 \end{align}
 Combining \eqref{eq:repair-mig} and \eqref{eq:repair-exp},
 \[
 \mathfrak D(\tilde{\bf S},\tilde{\bf H})\le \mathfrak D({\bf S},\overline{\bf H})+4\varepsilon.
 \]

 \emph{Rate.} The expanded subtree alone contributes
 \begin{align*}
 T_L\cdot 2^{-(d+\log_2 T_L)}\cdot (d+\log_2 T_L)
 &=\varepsilon\cdot (d+\log_2 T_L) \\
 &\ge \varepsilon\cdot \log_2 T_L \;=\; R
 \end{align*}
 to $\mathfrak{R}(\tilde{\bf S},\tilde{\bf H})$. The remaining leaves contribute non-negatively, so $\mathfrak{R}(\tilde{\bf S},\tilde{\bf H})\ge R$.
 \end{IEEEproof}

\subsection{The Algorithm and Main Theorem}\label{sec:main-theorem}

The atomic-assignment DP (Proposition~\ref{prop:large-item-dp}) returns an assignment that may leave some leaves empty, since the DP works over rounded mass-vectors and does not enforce one-item-per-leaf. Problem~1 forbids empty leaves, so before we can claim feasibility, surjectivity has to be restored. Since small items are abundant, we force one designated small item into each leaf. The cost is small because each small item has mass at most $\varepsilon\cdot 2^{-R/\varepsilon}$, which is a constant times $\varepsilon$ when summed over the $L\le 1/\varepsilon$ leaves.

\begin{lemma}[Leaf seeding with reserved small items]\label{lem:seed}
Fix a height vector $\overline{\bf H}$ with $L \leq \frac{1}{\varepsilon}$ leaves, and let
$r_1,\ldots,r_L$ be distinct elements of $\cS_{\mathrm{small}}$.
For any assignment ${\bf S}=(\cS_1,\ldots,\cS_L)$ of the support to the
leaves of $\overline{\bf H}$, there exists another assignment
${\bf S}^{\mathrm{seed}}$ such that $r_j\in \cS_j^{\mathrm{seed}}$ for
every $j\in[L]$, and \vspace{-0.75ex}
\[
\left|
\mathfrak D({\bf S}^{\mathrm{seed}},\overline{\bf H})
-
\mathfrak D({\bf S},\overline{\bf H})
\right|
\le
2\sum_{j=1}^L p_{r_j}
\le
2L\varepsilon\,2^{-R/\varepsilon}
\le
2\varepsilon.\vspace{-0.75ex}
\]
In particular, ${\bf S}^{\mathrm{seed}}$ is surjective.
\end{lemma}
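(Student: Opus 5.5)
The plan is to construct ${\bf S}^{\mathrm{seed}}$ from ${\bf S}$ by a sequence of single-item moves and to control the change in divergence move by move, exactly as in the migration step of Lemma~\ref{lem:repair}. Concretely, for each $j\in[L]$, let $\ell_j$ be the leaf to which ${\bf S}$ assigns $r_j$, i.e.\ $r_j\in\cS_{\ell_j}$. Define ${\bf S}^{\mathrm{seed}}$ by removing $r_j$ from $\cS_{\ell_j}$ and inserting it into leaf $j$, for all $j$ simultaneously, leaving every other item where it was. Since the $r_j$ are distinct, this is well defined, and by construction $r_j\in\cS_j^{\mathrm{seed}}$. Every leaf then contains at least one item, so surjectivity is immediate.

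For the divergence bound, I would perform the $L$ moves one at a time; moves with $\ell_j=j$ are trivial. A single move of $r_j$ changes exactly two leaf masses: $\Pr(\cS_{\ell_j})$ decreases by $p_{r_j}$ and $\Pr(\cS_j)$ increases by $p_{r_j}$. By $\bigl||a-c|-|b-c|\bigr|\le|a-b|$ applied to each affected term of $\mathfrak D(\cdot,\overline{\bf H})$, that move changes $\mathfrak D$ by at most $2p_{r_j}$. Summing over the $L$ moves via the triangle inequality gives
\[
\left|\mathfrak D({\bf S}^{\mathrm{seed}},\overline{\bf H})-\mathfrak D({\bf S},\overline{\bf H})\right|\le 2\sum_{j=1}^L p_{r_j}.
\]
An alternative that avoids sequencing is to compare the two leaf-mass vectors directly: the total $\ell_1$ change is at most $\sum_j 2p_{r_j}$, since each $p_{r_j}$ leaves one coordinate and enters another.

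The remaining inequalities are numerical. Each $r_j\in\cS_{\mathrm{small}}$, so $p_{r_j}\le\theta=\varepsilon 2^{-R/\varepsilon}$, giving $2\sum_j p_{r_j}\le 2L\varepsilon 2^{-R/\varepsilon}$. Using $L\le 1/\varepsilon$ this is at most $2\cdot 2^{-R/\varepsilon}$, and Assumption~3 ($2^{R/\varepsilon}\ge1/\varepsilon$) gives $2^{-R/\varepsilon}\le\varepsilon$, hence the final bound $2\varepsilon$.

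There is no real obstacle here. The only point needing care is to make sure that moving an $r_j$ out of its original leaf cannot empty a leaf the argument relies on. This is handled because surjectivity is checked only on the final partition, where every leaf $j$ holds $r_j$; intermediate emptiness is irrelevant to the divergence computation.
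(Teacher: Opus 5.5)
Your proposal is correct and follows the paper's proof essentially step for step: move each reserved item $r_j$ to leaf $j$, note that each move changes only two leaf masses and so changes $\mathfrak D(\cdot,\overline{\bf H})$ by at most $2p_{r_j}$, sum over $j$, and then use $p_{r_j}\le\varepsilon 2^{-R/\varepsilon}$, $L\le 1/\varepsilon$, and Assumption~3. Surjectivity holds for the reason you give: each leaf $j$ ends up containing $r_j$.
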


 \begin{IEEEproof}
 For each $j\in[L]$, move the item $r_j$ to leaf $j$ (do nothing if it is already there). Moving $r_j$ can change the mass of at most two leaves --- its source leaf and leaf $j$ --- and therefore changes the divergence by at most $2p_{r_j}$. Summing over $j$,
 \[
 \left|
 \mathfrak D({\bf S}^{\mathrm{seed}},\overline{\bf H})
 -
 \mathfrak D({\bf S},\overline{\bf H})
 \right|
 \le 2\sum_{j=1}^L p_{r_j}.
 \]
 Since each $r_j\in\cS_{\mathrm{small}}$ has $p_{r_j}\le \varepsilon\,2^{-R/\varepsilon}$,
 \[
 2\sum_{j=1}^L p_{r_j}
 \le 2L\varepsilon\,2^{-R/\varepsilon}\le 2\varepsilon,
 \]
 where the last inequality uses $L\le 1/\varepsilon$ and Assumption~3. Every leaf contains its designated seed, so the assignment is surjective.
 \end{IEEEproof}
\medskip

\noindent The four ingredients (truncation, blocking, DP, repair), together with the seeding lemma, give the algorithm below.

\medskip

\noindent\textbf{Algorithm.} Enumerate every full binary tree shape $\overline{\bf H}$ of maximum depth $d=\log_2(1/\varepsilon)$. For each:
\begin{itemize}
\item[(a)] \emph{Block} items of mass below $\varepsilon^2$ into atomic units (Lemma~\ref{lem:blocking}).
\item[(b)] Run the \emph{atomic-assignment DP} (Proposition~\ref{prop:large-item-dp}) for $\overline{\bf H}$, returning an assignment ${\bf S}_{\overline{\bf H}}$ of the atomic units, possibly with empty leaves.
\item[(c)] Unpack each block into its constituent items at the leaf to which the block was assigned, then \emph{seed} ${\bf S}_{\overline{\bf H}}$ with $L$ reserved small items (Lemma~\ref{lem:seed}) to obtain a surjective ${\bf S}_{\overline{\bf H}}^{\mathrm{seed}}$.
\item[(d)] If $\mathfrak{R}(\overline{\bf H})\ge R$, accept $({\bf S}_{\overline{\bf H}}^{\mathrm{seed}},\overline{\bf H})$. Else if $\overline{\bf H}$ has a depth-$d$ leaf, apply \emph{repair} (Lemma~\ref{lem:repair}) using $T_L=2^{R/\varepsilon}$ further small items disjoint from the seeds, and accept the result. Else discard.
\end{itemize}
Return the accepted candidate of minimum divergence.

\begin{theorem}\label{thm:main}
Suppose $R$ is constant with respect to $n$ and that Assumptions~1--3 hold. The algorithm above runs in time
\[
O(n)\cdot \exp\!\left(O\!\left(\frac{\log(1/\varepsilon)}{\varepsilon}\right)\right)
\]
and outputs a feasible solution $(\tilde{\bf S},\tilde{\bf H})$ to Problem~1 with
\[
\mathfrak D(\tilde{\bf S},\tilde{\bf H})\le \mathrm{OPT}+12\varepsilon.
\]
In particular, for fixed $\varepsilon$ and constant $R$, the running time is linear in $n$.
\end{theorem}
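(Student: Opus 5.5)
The proof splits into a running-time bound and an approximation guarantee. The guarantee comes from tracking an optimal solution $({\bf S}^*,{\bf H}^*)$ of Problem~1, with value $\mathrm{OPT}$, through the four reductions and adding up their additive losses.

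\emph{Running time.} The number of full binary tree shapes of depth at most $d=\log_2(1/\varepsilon)$ is at most the number of binary trees with at most $2^d=1/\varepsilon$ leaves, hence $\exp(O(1/\varepsilon))$. For each shape we do four things.
\begin{itemize}
\item Blocking: one greedy pass, $O(n)$.
\item The DP: $O(n)\cdot\exp(O(\log(1/\varepsilon)/\varepsilon))$ by Proposition~\ref{prop:large-item-dp}.
\item Unpacking and seeding: $O(n)$.
\item Repair: $O(n+T_L)$ with $T_L=2^{R/\varepsilon}$, which is $\exp(O(1/\varepsilon))$ for constant $R$.
\end{itemize}
Reserving the seed and repair items needs only one scan of $\cS_{\mathrm{small}}$, and Assumption~2 together with $L\le 1/\varepsilon$ guarantees enough disjoint small items. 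Multiplying the per-shape cost by the number of shapes stays inside $O(n)\cdot\exp(O(\log(1/\varepsilon)/\varepsilon))$.

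\emph{Feasibility.} Each accepted candidate is surjective because of seeding, via Lemma~\ref{lem:seed}. It meets the rate constraint either directly, when $\mathfrak R(\overline{\bf H})\ge R$, or by Lemma~\ref{lem:repair}. I also need at least one candidate to be accepted; the accuracy argument below exhibits one.

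\emph{Accuracy.} Apply Lemma~\ref{lem:truncation} to $({\bf S}^*,{\bf H}^*)$. This gives $\overline{\bf H}$ of depth at most $d$ with $\mathrm{OPT}(\overline{\bf H})\le\mathrm{OPT}$, and $\overline{\bf H}$ has a depth-$d$ leaf whenever truncation actually occurred. There are two cases.
\begin{itemize}
\item If ${\bf H}^*$ already had depth at most $d$, then $\overline{\bf H}={\bf H}^*$ has $\mathfrak R(\overline{\bf H})\ge R$, so step (d) accepts it without repair.
\item Otherwise $\overline{\bf H}$ has a depth-$d$ leaf, so step (d) either accepts it directly or repairs it.
\end{itemize}
In both cases $\overline{\bf H}$ is enumerated and yields an accepted candidate. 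The losses then add up as follows.
\begin{itemize}
\item Blocking, Lemma~\ref{lem:blocking}: the optimal blocked assignment for $\overline{\bf H}$ costs at most $\mathrm{OPT}(\overline{\bf H})+4\varepsilon$.
\item DP, Proposition~\ref{prop:large-item-dp}: ${\bf S}_{\overline{\bf H}}$ costs at most $\mathrm{OPT}+6\varepsilon$. Unpacking blocks leaves leaf masses, and hence $\mathfrak D$, unchanged.
\item Seeding: adds at most $2\varepsilon$, giving $\mathrm{OPT}+8\varepsilon$.
\item Repair, if needed: adds at most $4\varepsilon$, giving $\mathrm{OPT}+12\varepsilon$.
\end{itemize}
The algorithm returns the minimum over accepted candidates, so its output is no worse than this particular one.

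\emph{Main obstacle.} The delicate step is combining seeding with repair, since Lemma~\ref{lem:repair} has its own hypotheses and internal moves. Lemma~\ref{lem:repair} reindexes leaves by Lemma~\ref{lem: monotonicity}, but that lemma speaks about optimal partitions, while the seeded partition is not optimal. Two points need checking.
\begin{itemize}
\item The repair bound itself does not use monotonicity: migration costs at most $2\Pr(\cS)\le2\varepsilon$ and expansion at most $2\eta<2\varepsilon$ for any chosen depth-$d$ leaf. So the repair can be applied to any depth-$d$ leaf of the seeded partition.
\item Repair migrates the reserved set $\cS$ into leaf $L$, so those items must be disjoint from the seeds. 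Otherwise some leaf could become empty and surjectivity would be lost.
\end{itemize}
I will state both points explicitly, and then confirm that the expanded subtree keeps every new sub-leaf nonempty (each receives one item of $\cS$) and that leaf $1$ of the subtree inherits the remaining mass of ${\bf S}''_L$.
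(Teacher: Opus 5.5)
Your proposal is correct and follows the paper's proof essentially step for step. It uses truncation to obtain $\overline{\bf H}^*$ with $\mathrm{OPT}(\overline{\bf H}^*)\le\mathrm{OPT}$, splits into two cases on whether $\max_j h^*_j\le d$, accumulates the additive chain $4\varepsilon+2\varepsilon+2\varepsilon+4\varepsilon$ from blocking, DP, seeding and repair, and bounds the running time as the per-shape cost times $\exp(O(1/\varepsilon))$ shapes. Your additional checks are correct and tighten points the paper leaves implicit: the repair bound holds for any depth-$d$ leaf without the monotone reindexing, the repair items must be disjoint from the seeds to preserve surjectivity, and Assumption~2 supplies the $T_L+L$ small items this requires.
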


\begin{IEEEproof}
 Let $({\bf S}^*,{\bf H}^*)$ be any optimal solution, with $\mathfrak D({\bf S}^*,{\bf H}^*)=\mathrm{OPT}$.

 \textit{Per-candidate divergence bound.} Fix a candidate $\overline{\bf H}$ of depth at most $d$. The blocked instance for $\overline{\bf H}$ has optimum value at most $\mathrm{OPT}(\overline{\bf H})+4\varepsilon$ (Lemma~\ref{lem:blocking}). Step (b) returns an assignment within $2\varepsilon$ of that blocked optimum (Proposition~\ref{prop:large-item-dp}), and unpacking blocks does not change leaf masses. Hence after step (b),
 \begin{align}\label{eq:thm-after-dp}
 \mathfrak D({\bf S}_{\overline{\bf H}},\overline{\bf H})\le \mathrm{OPT}(\overline{\bf H})+6\varepsilon.
 \end{align}
 Step (c) seeds at additional cost $2\varepsilon$ (Lemma~\ref{lem:seed}), giving
 \begin{align}\label{eq:thm-after-seed}
 \mathfrak D({\bf S}_{\overline{\bf H}}^{\mathrm{seed}},\overline{\bf H})\le \mathrm{OPT}(\overline{\bf H})+8\varepsilon.
 \end{align}
 If step (d) accepts at the first branch, this is the produced bound. Otherwise repair adds at most $4\varepsilon$ (Lemma~\ref{lem:repair}), giving
 \begin{align}\label{eq:thm-after-repair}
 \mathfrak D(\tilde{\bf S},\tilde{\bf H})\le \mathrm{OPT}(\overline{\bf H})+12\varepsilon.
 \end{align}

 \textit{Global bound.} By Lemma~\ref{lem:truncation} there is a candidate $\overline{\bf H}^*$ of depth $\le d$ with $\mathrm{OPT}(\overline{\bf H}^*)\le \mathrm{OPT}$, satisfying:
 \begin{itemize}
 \item if $\max_j h^*_j\le d$, then $\overline{\bf H}^*={\bf H}^*$ and $\mathfrak{R}(\overline{\bf H}^*)\ge R$, so the first branch of step (d) applies and \eqref{eq:thm-after-seed} gives divergence $\le \mathrm{OPT}+8\varepsilon$;
 \item if $\max_j h^*_j>d$, then $\overline{\bf H}^*$ has a depth-$d$ leaf (Lemma~\ref{lem:truncation}), so the repair branch applies and \eqref{eq:thm-after-repair} gives divergence $\le \mathrm{OPT}+12\varepsilon$.
 \end{itemize}
 Either way, the candidate $\overline{\bf H}^*$ produces a feasible solution within $\mathrm{OPT}+12\varepsilon$, and the algorithm returns the best feasible candidate.

 \textit{Running time.} Blocking takes $O(n)$ time. The number of bounded-depth tree shapes is at most $2^{2^{d+1}-1}=\exp(O(1/\varepsilon))$. For each shape, steps (b)--(d) run in time $O(n)\cdot \exp(O(\log(1/\varepsilon)/\varepsilon))$, dominating $\exp(O(1/\varepsilon))$. Total running time is $O(n)\cdot \exp(O(\log(1/\varepsilon)/\varepsilon))$.
 \end{IEEEproof}

\section{Discussion}\label{sec:discussion}

\subsection{The Dual Formulation}

The paper focuses on Problem~1 (minimize divergence subject to a rate constraint). The dual problem --- Problem~2, maximize rate subject to a divergence constraint --- is equally natural and the same machinery applies, with truncation, blocking, and the bounded-depth DP carrying over without change. Only the feasibility check and the role of the repair lemma swap places. We omit the details.

\subsection{Scaling $R$ or $L$ with $n$}

Both $R$ and $L$ are constants in our setting. When $R = \Theta(\log n)$, the number of bounded-depth tree shapes is no longer a constant in $n$, and tighter enumeration strategies are required. We expect approximate counting and structured enumeration ideas from the polynomial-knapsack literature~\cite{Kellerer2004} to be relevant. When $L$ scales with $n$, the problem moves toward a near-symbol-wise regime where vanishing divergence is achievable by trivial constructions. Neither regime is immediately motivated by LLM steganography, but they raise the question of whether the constant in $\mathrm{OPT}+12\varepsilon$ depends qualitatively on the regime.





\begin{thebibliography}{20}

\bibitem{Alon1998}
N.~Alon, Y.~Azar, G.~J.~Woeginger, and T.~Yadid,
``Approximation schemes for scheduling on parallel machines,''
\emph{Journal of Scheduling},
vol.~1, no.~1, pp.~55--66, 1998.

\bibitem{Bloch2016}
M.~Bloch and J.~Barros,
\emph{Physical-Layer Security: From Information Theory to Security Engineering}.
Cambridge, UK: Cambridge University Press, 2016.

\bibitem{Buchem2020}
M.~Buchem and L.~Rohwedder,
``Additive approximation schemes for load balancing problems,''
\emph{arXiv preprint arXiv:2007.09333}, 2020.

\bibitem{Cachin1998}
C.~Cachin,
``An information-theoretic model for steganography,''
in \emph{Proc. 2nd Int. Workshop on Information Hiding (IH)},
Portland, OR, USA, 1998, pp.~306--318.

\bibitem{CapraraMSSP2000}
A.~Caprara, H.~Kellerer, and U.~Pferschy,
``A PTAS for the multiple subset sum problem with different knapsack capacities,''
\emph{Information Processing Letters},
vol.~73, no.~3--4, pp.~111--118, 2000.

\bibitem{ChakrabartySwamy2019}
D.~Chakrabarty and C.~Swamy,
``Approximation algorithms for minimum norm and ordered optimization problems,''
in \emph{Proc. 51st ACM Symposium on Theory of Computing (STOC)},
Phoenix, AZ, USA, 2019, pp.~126--137.

\bibitem{ChekuriKhanna2006}
C.~Chekuri and S.~Khanna,
``A polynomial time approximation scheme for the multiple knapsack problem,''
\emph{SIAM Journal on Computing},
vol.~35, no.~3, pp.~713--728, 2005.

\bibitem{CoverThomas2006}
T.~M.~Cover and J.~A.~Thomas,
\emph{Elements of Information Theory}, 2nd~ed.
Hoboken, NJ, USA: Wiley-Interscience, 2006.

\bibitem{Dai2019}
F.~Dai, Y.~Zhang, and D.~Wang,
``Towards diverse and natural image descriptions via a conditional GAN,''
in \emph{Proc. Adv. Neural Inf. Process. Syst. (NeurIPS)},
Vancouver, BC, Canada, 2019.

\bibitem{Fridrich2009}
J.~Fridrich,
\emph{Steganography in Digital Media: Principles, Algorithms, and Applications}.
Cambridge, UK: Cambridge University Press, 2009.

\bibitem{GareyJohnson1979}
M.~R.~Garey and D.~S.~Johnson,
\emph{Computers and Intractability: A Guide to the Theory of NP-Completeness}.
New York, NY, USA: W.~H.~Freeman, 1979.

\bibitem{Glover1989}
F.~Glover,
``Tabu search---Part I,''
\emph{ORSA Journal on Computing},
vol.~1, no.~3, pp.~190--206, 1989.

\bibitem{Huang2024ODStega}
Y.-S.~Huang, P.~Just, K.~Narayanan, and C.~Tian,
``OD-Stega: LLM-based near-imperceptible steganography via optimized distributions,''
\emph{arXiv preprint arXiv:2410.04328}, Oct.~2024.

\bibitem{IbarraKim1975}
O.~H.~Ibarra and C.~E.~Kim,
``Fast approximation algorithms for the knapsack and sum of subset problems,''
\emph{Journal of the ACM},
vol.~22, no.~4, pp.~463--468, 1975.

\bibitem{Kellerer2004}
H.~Kellerer, U.~Pferschy, and D.~Pisinger,
\emph{Knapsack Problems}.
Berlin, Germany: Springer, 2004.

\bibitem{Korf2009}
R.~E.~Korf,
``Multi-way number partitioning,''
in \emph{Proc. 21st Int. Joint Conf. Artificial Intelligence (IJCAI)},
Pasadena, CA, USA, 2009, pp.~538--543.

\bibitem{Lawler1979}
E.~L.~Lawler,
``Fast approximation algorithms for knapsack problems,''
\emph{Mathematics of Operations Research},
vol.~4, no.~4, pp.~339--356, 1979.

\bibitem{MartelloToth1990}
S.~Martello and P.~Toth,
\emph{Knapsack Problems: Algorithms and Computer Implementations}.
Chichester, U.K.: Wiley, 1990.

\bibitem{Sahni1975}
S.~Sahni,
``Approximate algorithms for the 0/1 knapsack problem,''
\emph{Journal of the ACM},
vol.~22, no.~1, pp.~115--124, 1975.

\bibitem{Ziegler2019}
Z.~M.~Ziegler and A.~M.~Rush,
``Neural text degeneration with unlikelihood training,''
in \emph{Proc. Int. Conf. Learn. Represent. (ICLR)},
New Orleans, LA, USA, 2019.
\end{thebibliography}


\end{document}